\documentclass[a4paper,11pt]{article}
\usepackage{amsmath,amsthm,amssymb}
\usepackage{graphicx}
\usepackage{mathrsfs}
\usepackage{CJKutf8}
\setlength{\topmargin}{30mm}
\addtolength{\topmargin}{-1in}
\setlength{\oddsidemargin}{30mm}
\addtolength{\oddsidemargin}{-1in}
\setlength{\textwidth}{150mm}
\setlength{\textheight}{234mm}
\setlength{\headsep}{0mm}
\setlength{\headheight}{0mm}
\setlength{\topskip}{0mm}
\newtheorem{theorem}{Theorem}[section]
\newtheorem*{theorem*}{Theorem}

\newtheorem{proposition}[theorem]{Proposition}
\newtheorem{corollary}[theorem]{Corollary}
\newtheorem*{corollary*}{Corollary}
\newtheorem{lemma}[theorem]{Lemma}
\newtheorem{example}[theorem]{Example}
\newtheorem{remark}[theorem]{Remark}

\newcommand{\BZ}{\mathbb{Z}}
\newcommand{\BR}{\mathbb{R}}
\newcommand{\BC}{\mathbb{C}}
\newcommand{\bA}{\mathbf{A}}
\newcommand{\bB}{\mathbf{B}}
\newcommand{\bC}{\mathbf{C}}
\newcommand{\bD}{\mathbf{D}}
\newcommand{\bH}{\mathbf{H}}
\newcommand{\bI}{\mathbf{I}}
\newcommand{\bJ}{\mathbf{J}}
\newcommand{\bP}{\mathbf{P}}
\newcommand{\bQ}{\mathbf{Q}}
\newcommand{\be}{\mathbf{e}}
\newcommand{\bm}{\mathbf{m}}
\newcommand{\bn}{\mathbf{n}}
\newcommand{\bsigma}{\boldsymbol{\sigma}}
\newcommand{\bzero}{\mathbf{0}}
\newcommand{\cB}{\mathcal{B}}
\newcommand{\cC}{\mathcal{C}}
\newcommand{\cF}{\mathcal{F}}
\newcommand{\cO}{\mathcal{O}}
\newcommand{\cH}{\mathcal{H}}
\newcommand{\cL}{\mathcal{L}}
\newcommand{\cP}{\mathcal{P}}

\newcommand{\ra}{\mathrm{a}}
\newcommand{\Int}{\operatorname{Int}}
\newcommand{\Ad}{\operatorname{Ad}}
\newcommand{\adj}{\operatorname{adj}}
\newcommand{\rank}{\operatorname{rank}}
\newcommand{\Ker}{\operatorname{Ker}}
\newcommand{\tr}{\operatorname{tr}}

\renewcommand{\Im}{\operatorname{Im}}
\newcommand{\hsum}{\sideset{}{^\oplus}{\sum}}

\newcommand{\End}{\operatorname{End}}

\allowdisplaybreaks[3]

\title{Representation theory of $\mathfrak{sl}(2,\BR)\simeq \mathfrak{su}(1,1)$ and \\ a generalization of non-commutative harmonic oscillators}
\author{Ryosuke Nakahama\thanks{This work was supported by JST CREST Grant Number JPMJCR2113, Japan.}
\thanks{Email: ryosuke.nakahama@ntt.com} \\
\textit{NTT Institute for Fundamental Mathematics,} \\ \textit{NTT Communication Science Laboratories,} \\ \textit{Nippon Telegraph and Telephone Corporation,} \\
\textit{3-9-11 Midori-cho, Musashino-shi, Tokyo 180-8585, Japan}}
\date{\today}

\begin{document}

\maketitle

\begin{abstract}
The non-commutative harmonic oscillator (NCHO) was introduced as a specific Hamiltonian operator on $L^2(\BR)\otimes\BC^2$ by Parmeggiani and Wakayama. 
Then it was proved by Ochiai and Wakayama that the eigenvalue problem for NCHO is reduced to a Heun differential equation. 
In this article, we consider some generalization of NCHO for $L^2(\BR^n)\otimes\BC^p$ as a rotation-invariant differential equation. 
Then by applying a representation theory of $\mathfrak{sl}(2,\BR)\simeq \mathfrak{su}(1,1)$, 
we check that its restriction to the space of products of radial functions and homogeneous harmonic polynomials is reduced to 
a holomorphic differential equation on the unit disk, which is generically Fuchsian. \bigskip

\noindent \textbf{Keywords}: non-commutative harmonic oscillator; representation of $\mathfrak{sl}(2,\BR)$; unitary highest weight representation; Fuchsian differential equation. 
\\ \textbf{2020 Mathematics Subject Classification}: 81Q10; 34M03; 22E45. 
\end{abstract}

\section{Introduction}

In \cite{PW1, PW2}, Parmeggiani and Wakayama introduced the non-commutative harmonic oscillator (NCHO) on $L^2(\BR)\otimes\BC^2$. 
For the $\eta$-shifted version of the NCHO \cite{RW}, its Hamiltonian is given by the self-adjoint operator 
\begin{align*}
H_{\text{NCHO}}:\hspace{-3pt}&=\bA\biggl(-\frac{1}{2}\frac{d^2}{dx^2}+\frac{1}{2}x^2\biggr)+\bJ\biggl(x\frac{d}{dx}+\frac{1}{2}\biggr)+2\eta i\det(\bA\pm i\bJ)^{1/2}\bJ, 
\end{align*}
where $\bA,\bJ\in M(2,\BR)$, $\bA={}^t\hspace{-1pt}\bA$, $\bJ=-{}^t\hspace{-1pt}\bJ$, and $\eta\in\BR$. The original NCHO corresponds to $\eta=0$ case. 
Then in \cite{O1, O2, RW, W}, Ochiai, Reyes-Bustos and Wakayama proved that its eigenvalue problem $H_{\text{NCHO}}\psi=\lambda \psi$ is reduced to a Heun differential equation 
on a certain simply-connected domain in $\BC$. In the proof they used (quasi) intertwining operators for (reducible) principal series representations of $\mathfrak{sl}(2,\BR)$. 

In this article, we consider a generalization of this problem, that is, we want to find a function $\psi\in L^2(\BR^n)\otimes\BC^p$ satisfying 
\begin{equation}\label{formula_NCHO_intro}
\biggl[-\bA_1\sum_{j=1}^n\frac{\partial^2}{\partial x_j^2}+i\bA_2\sum_{j=1}^n\biggl(x_j\frac{\partial}{\partial x_j}+\frac{1}{2}\biggr)+\bA_3\sum_{j=1}^n x_j^2\biggr]\psi=2\bC \psi, 
\end{equation}
where $\bA_k,\bC\in M(p,\BC)$, $\bA_k=\bA_k^\dagger$, $\bC=\bC^\dagger$. Here $\bC^\dagger={}^t\hspace{-1pt}\overline{\bC}$. 
This equation commutes with the rotation of $\BR^n$ by orthogonal matrices $g\in O(n)$. 
Then according to the decomposition of $L^2(\BR^n)$ under $\mathfrak{sl}(2,\BR)\times O(n)$ in terms of harmonic polynomials, 
\[ L^2(\BR^n)\simeq \hsum_{k\in\BZ_{\ge 0}} L^2_{k+\frac{n}{2}}(\BR^+)\otimes \cH\cP_k(\BR^n) \]
(where notations are explained later), we check that the restriction of (\ref{formula_NCHO_intro}) on $L^2_{k+\frac{n}{2}}(\BR^+)\otimes \cH\cP_k(\BR^n)\otimes\BC^p$ is 
reduced to a holomorphic differential equation on the unit disk $\bD$, which has some $SU(1,1)$-covariance and is generically Fuchsian. 
As a tool we use intertwining operators for unitary highest weight representations of $\mathfrak{sl}(2,\BR)$, between $L^2_{k+\frac{n}{2}}(\BR^+)$ and the space of holomorphic functions on $\bD$. 

This paper is organized as follows. In Section \ref{section_HO} we begin with a review on usual harmonic oscillators. 
In Section \ref{section_sl2} we review representation theory for the Lie algebra $\mathfrak{sl}(2,\BR)\simeq\mathfrak{su}(1,1)$. 
In Section \ref{section_NCHO} we apply representation theory of $\mathfrak{sl}(2,\BR)\simeq\mathfrak{su}(1,1)$ for the analysis of the generalized NCHO.

\section{Usual harmonic oscillator and harmonic polynomials}\label{section_HO}

First we recall the usual harmonic oscillator of $n$-variables. For details about this and the next section, see, e.g., \cite{F, HT, N}. 
Let $H_{\text{HO}}$ be the self-adjoint operator on $L^2(\BR^n)$ given by 
\[ H_{\text{HO}}:=\frac{1}{2}\sum_{j=1}^n\biggl(-\frac{\partial^2}{\partial x_j^2}+x_j^2\biggr). \]
To find eigenfunctions of $H_{\text{HO}}$, we consider the creation and annihilation operators  
\[ \ra_j^\dagger:=\frac{1}{\sqrt{2}}\biggl(x_j-\frac{\partial}{\partial x_j}\biggr), \qquad \ra_j:=\frac{1}{\sqrt{2}}\biggl(x_j+\frac{\partial}{\partial x_j}\biggr), \]
so that 
\[ H_{\text{HO}}=\sum_{j=1}^n\ra_j^\dagger \ra_j+\frac{n}{2}, \qquad [\ra_j,\ra_k^\dagger]=\ra_j\ra_k^\dagger-\ra_k^\dagger \ra_j=\delta_{jk} \]
hold. Also let 
\[ h_0(x):=\pi^{-n/4}e^{-\sum_{j=1}^n x_j^2/2}\in L^2(\BR^n), \] 
and for $\bm=(m_1,\ldots,m_n)\in(\BZ_{\ge 0})^n$ let 
\[ h_\bm(x):=\frac{1}{\sqrt{\bm!}}(\ra_1^\dagger)^{m_1}\cdots (\ra_n^\dagger)^{m_n}h_0(x)
=\frac{\pi^{-n/4}}{\sqrt{\bm!}}\biggl(-\frac{1}{\sqrt{2}}\biggr)^{|\bm|}\prod_{j=1}^n e^{x_j^2/2}\biggl(\frac{\partial}{\partial x_j}\biggr)^{m_j}e^{-x_j^2}, \]
where $\bm!:=m_1!\cdots m_n!$, $|\bm|:=m_1+\cdots+m_n$. These $h_\bm(x)$ are called the Hermite functions. 
Then we can show 
\begin{gather*}
\ra_j^\dagger h_\bm(x)=\sqrt{m_j+1}h_{\bm+\be_j}(x), \qquad \ra_j h_\bm(x)=\begin{cases} \sqrt{m_j}h_{\bm-\be_j}(x) & (m_j\ge 1), \\ 0 & (m_j=0), \end{cases} \\
\langle h_\bm(x),h_\bn(x)\rangle_{L^2(\BR^n)}=\int_{\BR^n}h_\bm(x)\overline{h_\bn(x)}\,dx=\delta_{\bm\bn}, 
\end{gather*}
where $\be_j=(0,\ldots,0,\overset{j\text{th}}{\check{1}},0,\ldots,0)$. Especially $h_\bm(x)$ is an eigenfunction of $H_{\text{HO}}$ with the eigenvalue $|\bm|+\frac{n}{2}$. 
Now since 
\[ L^2(\BR^n)_{\text{fin}}:=\bigoplus_{\bm\in(\BZ_{\ge 0})^n}\BC h_\bm(x)=\cP(\BR^n)h_0(x) \]
is dense in $L^2(\BR^n)$, where $\cP(\BR^n)$ is the space of polynomials on $\BR^n$, the set of eigenvalues of $H_{\text{HO}}$ is $\BZ_{\ge 0}+\frac{n}{2}$, 
and for each $k\in\BZ_{\ge 0}$, the corresponding eigenspaces are given by 
\[ \biggl\{ \psi\in L^2(\BR^n)\biggm|H_{\text{HO}}\psi=\biggl(k+\frac{n}{2}\biggr)\psi\biggr\}=\bigoplus_{\substack{\bm\in(\BZ_{\ge 0})^n \\ |\bm|=k}}\BC h_\bm(x). \]
The set $\{h_\bm(x)\}_{\bm\in(\BZ_{\ge_0})^n}$ forms an orthonormal basis of $L^2(\BR^n)$. 

The space $L^2(\BR^n)$ is unitarily equivalent to the following \textit{Fock space}
\[ \cF(\BC^n):=\biggl\{ f\in\cO(\BC^n)\biggm| \Vert f\Vert_{\cF(\BC^n)}^2:=\frac{1}{\pi^n}\int_{\BC^n}|f(w)|^2e^{-\sum_{j=1}^n|w_j|^2}\,dw<\infty \biggr\}, \]
where $\cO(\BC^n)$ denotes the space of holomorphic functions on $\BC^n$, via the \textit{Bargmann transform} 
\begin{gather*}
\cB\colon L^2(\BR^n)\longrightarrow \cF(\BC^n), \\
(\cB \psi)(w):=\pi^{-n/4}\int_{\BR^n}\psi(x) e^{\sum_{j=1}^n\bigl(\sqrt{2}x_jw_j-\frac{x_j^2}{2}-\frac{w_j^2}{2}\bigr)}\,dx. 
\end{gather*}
Via $\cB$, the creation and annihilation operators are transformed as 
\[ (\cB(\ra_j^\dagger \psi))(w)=w_j(\cB \psi)(w), \qquad (\cB(\ra_j \psi))(w)=\frac{\partial}{\partial w_j}(\cB \psi)(w). \]
Especially we have 
\begin{gather*}
\cB\circ H_{\text{HO}}\circ\cB^{-1}=\cB\circ\biggl(\sum_{j=1}^n\ra_j^\dagger \ra_j+\frac{n}{2}\biggr)\circ\cB^{-1}=\sum_{j=1}^nw_j\frac{\partial}{\partial w_j}+\frac{n}{2}, \\
(\cB h_\bm)(w)=\frac{1}{\sqrt{\bm!}}w_1^{m_1}\cdots w_n^{m_n}. 
\end{gather*}
The structure of eigenspaces of $\sum_{j=1}^nw_j\frac{\partial}{\partial w_j}+\frac{n}{2}$ coincides with that of $H_{\text{HO}}$. 

Next we recall harmonic polynomials. For $k\in\BZ_{\ge 0}$, let $\cP_k(\BR^n)$ be the space of homogeneous polynomials of degree $k$, 
and let $\cH\cP_k(\BR^n)$ be the space of homogeneous harmonic polynomials of degree $k$. 
\begin{align*}
\cP_k(\BR^n)&:=\{ h(x)\in\cP(\BR^n)\mid h(tx)=t^kh(x) \; (t\in\BR)\}, \\
\cH\cP_k(\BR^n)&:=\biggl\{ h(x)\in\cP_k(\BR^n)\biggm| \sum_{j=1}^n\frac{\partial^2h}{\partial x_j^2}(x)=0\biggr\}. 
\end{align*}
We note that if $n=1$, then $\cH\cP_0(\BR)=\BC$, $\cH\cP_1(\BR)=\BC x$ and $\cH\cP_k(\BR)=\{0\}$ for $k\ge 2$. 
Then it is well-known that we have 
\[ \cP_k(\BR^n)=\cH\cP_k(\BR^n)\oplus\cP_{k-2}(\BR^n)r^2, \]
where $r^2(x):=x_1^2+\cdots+x_n^2$, and hence we have 
\begin{align*}
\cP(\BR^n)&=\bigoplus_{k=0}^\infty \BC[r^2]\cH\cP_k(\BR^n), \\
L^2(\BR^n)_{\text{fin}}&=\bigoplus_{k=0}^\infty \BC[r^2]\cH\cP_k(\BR^n)e^{-r^2/2}, 
\end{align*}
where $\BC[r^2]:=\bigoplus_{j=0}^\infty \BC r^{2j}$. Also, for $\phi(t)\in\BC[t]e^{-t}$, $h(x)\in\cH\cP_k(\BR^n)$, we have 
\begin{align*}
&\int_{\BR^n}\biggl|\phi\biggl(\frac{r^2(x)}{2}\biggr)h(x)\biggr|^2\,dx
=\iint_{\BR^+\times S^{n-1}}\biggl|\phi\biggl(\frac{r^2}{2}\biggr)h(r\omega)\biggr|^2 r^{n-1}\,drd\omega \\
&=\int_0^\infty \biggl|\phi\biggl(\frac{r^2}{2}\biggr)\biggr|^2 r^{2k+n-1}\,dr \int_{S^{n-1}}|h(\omega)|^2\,d\omega \\
&=2^{k+\frac{n}{2}-1}\int_0^\infty |\phi(t)|^2 t^{k+\frac{n}{2}-1}\,dt \int_{S^{n-1}}|h(\omega)|^2\,d\omega
\cdot\frac{2}{\Gamma\bigl(k+\frac{n}{2}\bigr)}\int_0^\infty e^{-s^2}s^{2k+n-1}\,ds \\
&=\frac{2^{k+\frac{n}{2}}}{\Gamma\bigl(k+\frac{n}{2}\bigr)}\int_0^\infty |\phi(t)|^2 t^{k+\frac{n}{2}-1}\,dt \iint_{\BR^+\times S^{n-1}}|h(s\omega)|^2 e^{-s^2}s^{n-1}\,dsd\omega \\
&=\frac{2^{k+\frac{n}{2}}}{\Gamma\bigl(k+\frac{n}{2}\bigr)}\int_0^\infty |\phi(t)|^2 t^{k+\frac{n}{2}-1}\,dt \int_{\BR^n}|h(x)|^2 e^{-r^2(x)}\,dx, 
\end{align*}
where $\BR^+:=\{t\in\BR\mid t>0\}$, $S^{n-1}:=\{x\in\BR^n\mid r^2(x)=1\}$. Now, for $\mu>0$ we define a Hilbert space $L^2_\mu(\BR^+)$ by 
\[ L^2_\mu(\BR^+):=\biggl\{ \phi\colon\BR^+\longrightarrow\BC \biggm| \Vert \phi\Vert_{\mu,\BR^+}^2:=\frac{2^\mu}{\Gamma(\mu)}\int_0^\infty |\phi(t)|^2t^{\mu-1}\,dt<\infty\biggr\}, \]
and define an inner product on $\cH\cP_k(\BR^n)$ by 
\[ \langle h_1,h_2\rangle_{\cH\cP}:=\int_{\BR^n}h_1(x)\overline{h_2(x)}e^{-r^2(x)}\,dx. \]
Then we have a Hilbert direct sum decomposition 
\begin{equation}\label{formula_decomp}
L^2(\BR^n)\simeq \hsum_{k\in\BZ_{\ge 0}} L^2_{k+\frac{n}{2}}(\BR^+)\otimes \cH\cP_k(\BR^n). 
\end{equation}
Especially, if $n=1$ then $L^2(\BR)=L^2_{\text{even}}(\BR)\oplus L^2_{\text{odd}}(\BR)\simeq L^2_{1/2}(\BR^+)\oplus L^2_{3/2}(\BR^+)x$ holds. 
In addition, for $\phi(t)\in L^2_{k+\frac{n}{2}}(\BR^+)_{\text{fin}}:=\BC[t]e^{-t}\subset L^2_{k+\frac{n}{2}}(\BR^+)$, $h(x)\in \cH\cP_k(\BR^n)$, we have 
\[ H_{\text{HO}}\biggl(\phi\biggl(\frac{r^2(x)}{2}\biggr)h(x)\biggr)=h(x)\biggl(-t\frac{d^2}{dt^2}-\biggl(k+\frac{n}{2}\biggr)\frac{d}{dt}+t\biggr)\phi(t)\biggr|_{t=r^2(x)/2}. \]
That is, the eigenvalue problem for $H_{\text{HO}}$ on $L^2(\BR^n)$ restricted to $L^2_{k+\frac{n}{2}}(\BR^+)\otimes \cH\cP_k(\BR^n)$ is equivalent to that for 
\begin{equation}\label{formula_HO_nu}
H_{\text{HO}}^{(\mu)}:=-t\frac{d^2}{dt^2}-\mu\frac{d}{dt}+t 
\end{equation}
on $L^2_\mu(\BR^+)$ for $\mu=k+\frac{n}{2}$. We note that the orthogonal group $O(n):=\{g\in M(n,\BR)\mid {}^t\hspace{-1pt}gg=I_n\}$ acts irreducibly on $\cH\cP_k(\BR^n)$ by $h(x)\mapsto h(g^{-1}x)$. 
In the next section we review the $\mathfrak{sl}(2,\BR)$-module structure on $L^2_\mu(\BR^+)$. 
Especially, (\ref{formula_decomp}) gives a decomposition as representations of $\mathfrak{sl}(2,\BR)\times O(n)$, which is a special case of \textit{Howe's dual pair correspondence} (see \cite{H, KV}).

\section{Representation theory of $\mathfrak{sl}(2,\BR)\simeq\mathfrak{su}(1,1)$}\label{section_sl2}

In this section we consider the Lie groups 
\begin{align*}
SL(2,\BR)&:=\biggl\{g=\begin{bmatrix} a&b\\c&d \end{bmatrix} \biggm| \begin{matrix} a,b,c,d\in\BR, \\ \det(g)=ad-bc=1 \end{matrix} \biggr\}, \\
SU(1,1)&:=\biggl\{g=\begin{bmatrix} a&b\\\overline{b}&\overline{a} \end{bmatrix} \biggm| \begin{matrix} a,b\in\BC, \\ \det(g)=|a|^2-|b|^2=1 \end{matrix} \biggr\}. 
\end{align*}
Then their corresponding Lie algebras are given by 
\begin{align*}
\mathfrak{sl}(2,\BR):=\biggl\{\begin{bmatrix} a&b\\c&-a \end{bmatrix} \biggm| a,b,c\in\BR \biggr\}, \qquad
\mathfrak{su}(1,1):=\biggl\{\begin{bmatrix} ia&b\\\overline{b}&-ia \end{bmatrix} \biggm| a\in\BR,\, b\in\BC\biggr\}. 
\end{align*}
That is, for $X,Y\in \mathfrak{sl}(2,\BR)$ (resp. $\mathfrak{su}(1,1)$), $[X,Y]:=XY-YX\in \mathfrak{sl}(2,\BR)$ (resp. $\mathfrak{su}(1,1)$) holds, 
and $e^{tX}=\sum_{n=0}^\infty (tX)^n/n!\in SL(2,\BR)$ (resp. $SU(1,1)$) holds for all $t\in\BR$. Here $i:=\sqrt{-1}$. 
Their complexification is given by 
\[ \mathfrak{sl}(2,\BR)\otimes_\BR\BC=\mathfrak{su}(1,1)\otimes_\BR\BC=\mathfrak{sl}(2,\BC):=\biggl\{\begin{bmatrix} a&b\\c&-a \end{bmatrix} \biggm| a,b,c\in\BC \biggr\}. \]
Let $C:=\bigl[\begin{smallmatrix}1&i\\i&1\end{smallmatrix}\bigr]$. Then we have isomorphisms 
\begin{align*}
&\Int(C)\colon SU(1,1)\longrightarrow SL(2,\BR), & g&\longmapsto CgC^{-1}, \\
&\Ad(C)\colon \mathfrak{su}(1,1)\longrightarrow \mathfrak{sl}(2,\BR), & X&\longmapsto CXC^{-1}. 
\end{align*}
We take two bases $\{H,E,F\},\{{}^c\hspace{-1pt}H,{}^c\hspace{-1pt}E,{}^c\hspace{-1pt}F\}\subset\mathfrak{sl}(2,\BC)$ by 
\begin{align*}
H&:=\begin{bmatrix}1&0\\0&-1\end{bmatrix}, &E&:=\begin{bmatrix}0&1\\0&0\end{bmatrix}, &F&:=\begin{bmatrix}0&0\\1&0\end{bmatrix}, \\
{}^c\hspace{-1pt}H&:=\Ad(C)H, &{}^c\hspace{-1pt}E&:=\Ad(C)E, &{}^c\hspace{-1pt}F&:=\Ad(C)F, 
\end{align*}
so that 
\begin{align*}
[H,E]&=2E, &[H,F]&=-2F, &[E,F]&=H, \\ 
[{}^c\hspace{-1pt}H,{}^c\hspace{-1pt}E]&=2{}^c\hspace{-1pt}E, &[{}^c\hspace{-1pt}H,{}^c\hspace{-1pt}F]&=-2{}^c\hspace{-1pt}F, &[{}^c\hspace{-1pt}E,{}^c\hspace{-1pt}F]&={}^c\hspace{-1pt}H. 
\end{align*}
A representation $(\tau,V)$ of $\mathfrak{sl}(2,\BC)$ is a pair of a $\BC$-vector space $V$ and a $\BC$-linear map 
$\tau\colon\mathfrak{sl}(2,\BC)\rightarrow \End_\BC(V):=\{\BC\text{-linear maps }V\to V\}$ satisfying $\tau([X,Y])=[\tau(X),\tau(Y)]:=\tau(X)\tau(Y)-\tau(Y)\tau(X)$ for all $X,Y\in\mathfrak{sl}(2,\BC)$. 
We say $(\tau,V)$ is $\mathfrak{sl}(2,\BR)$-(infinitesimally) unitary (resp. $\mathfrak{su}(1,1)$-(infinitesimally) unitary) if there exists an inner product $\langle\cdot,\cdot\rangle$ on $V$ 
satisfying $\langle\tau(X)v,w\rangle=-\langle v,\tau(X)w\rangle$ for all $v,w\in V$ and $X\in\mathfrak{sl}(2,\BR)$ (resp. $\mathfrak{su}(1,1)$), or equivalently, 
\begin{align*}
&\langle\tau(H)v,w\rangle=-\langle v,\tau(H)w\rangle, \quad \langle\tau(E)v,w\rangle=-\langle v,\tau(E)w\rangle, \quad 
\langle\tau(F)v,w\rangle=-\langle v,\tau(F)w\rangle, 
\end{align*}
resp. 
\begin{align*}
\langle\tau(H)v,w\rangle=\langle v,\tau(H)w\rangle, \qquad \langle\tau(E)v,w\rangle=-\langle v,\tau(F)w\rangle. 
\end{align*}
In the following we give some examples of representations of $\mathfrak{sl}(2,\BC)$. 

First we consider the spaces of holomorphic functions on the unit disk and the upper half plane. Let 
\begin{align*}
\bD:=\{z\in\BC\mid |z|<1\}, \qquad \bH:=\{y\in\BC\mid \Im y>0\}, 
\end{align*}
and for $\mu>1$, let $\cH_\mu(\bD)$, $\cH_\mu(\bH)$ be the Hilbert spaces given by 
\begin{align*}
\cH_\mu(\bD)&:=\biggl\{ f(z)\in\cO(\bD)\biggm| \Vert f\Vert_{\mu,\bD}^2:=\frac{\mu-1}{\pi}\int_{\bD}|f(z)|^2(1-|z|^2)^{\mu-2}\,dz<\infty\biggr\}, \\
\cH_\mu(\bH)&:=\biggl\{ f(y)\in\cO(\bH)\biggm| \Vert f\Vert_{\mu,\bH}^2:=\frac{\mu-1}{4\pi}\int_{\bH}|f(y)|^2(\Im y)^{\mu-2}\,dy<\infty\biggr\}. 
\end{align*}
These spaces are called \textit{weighted Bergman spaces}. These are unitarily isomorphic by 
\begin{align*}
\cC_\mu&\colon\cH_\mu(\bD)\longrightarrow\cH_\mu(\bH), & (\cC_\mu f)(y)&:=\biggl(\frac{-iy+1}{2}\biggr)^{-\mu}f\biggl(\frac{y-i}{-iy+1}\biggr), \\
\cC_\mu^{-1}&\colon\cH_\mu(\bH)\longrightarrow\cH_\mu(\bD), & (\cC_\mu^{-1} f)(z)&:=(iz+1)^{-\mu}f\biggl(\frac{z+i}{iz+1}\biggr), 
\end{align*}
where we choose the branches as $\bigl(\frac{-iy+1}{2}\bigr)^{-\mu}\bigr|_{y=i}=(iz+1)^{-\mu}|_{z=0}=1$. 
The norm $\Vert f\Vert_{\mu,\bD}$ is explicitly computed by using the Taylor expansion of $f$ as 
\[ \biggl\Vert \sum_{m=0}^\infty a_mz^m\biggr\Vert_{\mu,\bD}^2=\sum_{m=0}^\infty \frac{m!}{(\mu)_m}|a_m|^2, \]
where $(\mu)_m:=\mu(\mu+1)(\mu+2)\cdots(\mu+m-1)$ for $m\in\BZ_{>0}$ and $(\mu)_0:=1$. Then this is positive definite for $\mu>0$, 
and we can redefine $\cH_\mu(\bD)$, $\cH_\mu(\bH)$ for $\mu>0$ by 
\begin{align*}
\cH_\mu(\bD)&:=\biggl\{ f(z)=\sum_{m=0}^\infty a_mz^m\in\cO(\bD)\biggm| \Vert f\Vert_{\mu,\bD}^2:=\sum_{m=0}^\infty \frac{m!}{(\mu)_m}|a_m|^2<\infty\biggr\}, \\
\cH_\mu(\bH)&:=\cC_\mu(\cH_\mu(\bD)). 
\end{align*}
Also, for $\mu>0$ let 
\[ \cH_\mu(\bD)_{\text{fin}}:=\BC[z]\subset\cH_\mu(\bD), \qquad \cH_\mu(\bH)_{\text{fin}}:=\cC_\mu(\cH_\mu(\bD)_{\text{fin}}), \]
where $\BC[z]$ denotes the space of holomorphic polynomials on $\BC$. 

Next we consider the representations on $\cH_\mu(\bD)$, $\cH_\mu(\bH)$. When $\mu\in\BZ_{>0}$, the Lie group $SU(1,1)$ acts on $\cH_\mu(\bD)$ and $SL(2,\BR)$ acts on $\cH_\mu(\bH)$ unitarily by 
\[ \tau_\mu\biggl(\begin{bmatrix}a&b\\c&d\end{bmatrix}^{-1}\biggr)f(z):=(cz+d)^{-\mu}f\biggl(\frac{az+b}{cz+d}\biggr). \]
We note that this is not well-defined if $\mu\notin\BZ_{>0}$, since $(cz+d)^{-\mu}$ is not single-valued. 
However, we can make this well-defined by considering the universal covering groups of $SU(1,1)$ and $SL(2,\BR)$ for general $\mu>0$. 
The representations $(\tau_\mu,\cH_\mu(\bD))$, $(\tau_\mu,\cH_\mu(\bH))$ for $\mu>1$ are called \textit{holomorphic discrete series representations}, 
since they correspond to discrete spectra in $L^2(SU(1,1))$ and $L^2(SL(2,\BR))$. 
By differentiating $\tau_\mu$, we get the Lie algebra representation of $\mathfrak{su}(1,1)$ on $\cH_\mu(\bD)_{\text{fin}}$ and that of $\mathfrak{sl}(2,\BR)$ on $\cH_\mu(\bH)_{\text{fin}}$ as 
\[ \tau_\mu(X)f(z):=\frac{d}{dt}\tau_\mu(e^{tX})f(z)\biggr|_{t=0} \]
(we use the same symbol $\tau_\mu$), and by extending this complex-linearly, we get the representations of $\mathfrak{sl}(2,\BC)$ on $\cH_\mu(\bD)_{\text{fin}}$, $\cH_\mu(\bH)_{\text{fin}}$. 
The resulting representation is given by 
\begin{gather*}
\tau_\mu(H)f(z)=-2z\frac{df}{dz}(z)-\mu f(z), \qquad \tau_\mu(E)f(z)=-\frac{df}{dz}(z), \\ \tau_\mu(F)f(z)=z^2\frac{df}{dz}(z)+\mu zf(z). 
\end{gather*}
This is well-defined for all $\mu>0$. 
$\cH_\mu(\bD)_{\text{fin}}$ is $\mathfrak{su}(1,1)$-infinitesimally unitary and $\cH_\mu(\bH)_{\text{fin}}$ is $\mathfrak{sl}(2,\BR)$-infinitesimally unitary. 
When we distinguish $\bD$ and $\bH$, we write $\tau_\mu^\bD$, $\tau_\mu^\bH$ instead of $\tau_\mu$. These are related as 
\begin{align*}
\cC_\mu\circ\tau_\mu^\bD(g)&=\tau_\mu^\bH(\Int(C)g)\circ\cC_\mu && (g\in SU(1,1)), \\ 
\cC_\mu\circ\tau_\mu^\bD(X)&=\tau_\mu^\bH(\Ad(C)X)\circ\cC_\mu && (X\in\mathfrak{sl}(2,\BC)). 
\end{align*}
$\cH_\mu(\bD)$ has an orthogonal basis $\{z^m\}_{m=0}^\infty$, and $\cH_\mu(\bH)$ has an orthogonal basis 
$\bigl\{\cC_\mu(z^m)=\bigl(\frac{-iy+1}{2}\bigr)^{-\mu}\bigl(\frac{y-i}{-iy+1}\bigr)^m\bigr\}_{m=0}^\infty$. These satisfy 
\begin{subequations}\label{formula_basis}
\begin{equation}
\setlength{\belowdisplayskip}{0pt}
\langle z^m,z^n\rangle_{\mu,\bD}=\langle \cC_\mu(z^m),\cC_\mu(z^n)\rangle_{\mu,\bH}=\frac{m!}{(\mu)_m}\delta_{mn}, 
\end{equation}
\begin{align}
\setlength{\abovedisplayskip}{0pt}
\tau_\mu^\bD(H)z^m&=-(2m+\mu)z^m, & \tau_\mu^\bH({}^c\hspace{-1pt}H)\cC_\mu(z^m)&=-(2m+\mu)\cC_\mu(z^m), \\
\tau_\mu^\bD(E)z^m&=-mz^{m-1}, & \tau_\mu^\bH({}^c\hspace{-1pt}E)\cC_\mu(z^m)&=-m\cC_\mu(z^{m-1}), \\
\tau_\mu^\bD(F)z^m&=(m+\mu)z^{m+1}, & \tau_\mu^\bH({}^c\hspace{-1pt}F)\cC_\mu(z^m)&=(m+\mu)\cC_\mu(z^{m+1}). 
\end{align}
\end{subequations}
The representations $(\tau_\mu^\bD,\cH_\mu(\bD)_{\text{fin}})$, $(\tau_\mu^\bH,\cH_\mu(\bH)_{\text{fin}})$ for $\mu>0$ are called \textit{unitary highest weight representations}. 

Second, for $\mu>0$ we consider the Hilbert space and its dense subspace 
\begin{align*}
L^2_\mu(\BR^+)&:=\biggl\{ \phi\colon\BR^+\longrightarrow\BC \biggm| \Vert \phi\Vert_{\mu,\BR^+}^2:=\frac{2^\mu}{\Gamma(\mu)}\int_0^\infty |\phi(t)|^2t^{\mu-1}\,dt<\infty\biggr\}, \\
L^2_\mu(\BR^+)_{\text{fin}}&:=\BC[t]e^{-t}\subset L^2_\mu(\BR^+). 
\end{align*}
Let $\cL_\mu$ be the \textit{weighted Laplace transform} given by 
\begin{gather*}
\cL_\mu\colon L^2_\mu(\BR^+)\longrightarrow \cH_\mu(\bH), \qquad (\cL_\mu\phi)(y):=\frac{2^\mu}{\Gamma(\mu)}\int_0^\infty \phi(t)e^{iyt}t^{\mu-1}\,dt. 
\end{gather*}
For $m\in\BZ_{\ge 0}$ let 
\[ l_m^{(\mu)}(t):=i^m\sum_{j=0}^m\frac{(-m)_j}{(\mu)_jj!}(2t)^je^{-t}=\frac{i^m}{(\mu)_m}t^{-\mu+1}e^t\frac{d^m}{dt^m}t^{\mu+m-1}e^{-2t}\in L^2_\mu(\BR^+)_{\text{fin}}. \]
This is also written as $l_m^{(\mu)}(t)=i^m\frac{m!}{(\mu)_m}L_m^{(\mu-1)}(2t)e^{-t}$ by using the generalized Laguerre polynomial $L_m^{(\alpha)}(t)$. 
Then we can directly show that $\{l_m^{(\mu)}(t)\}_{m=0}^\infty$ forms an orthogonal basis of $L^2_\mu(\BR^+)$ satisfying 
\begin{gather*}
\langle l_m^{(\mu)}(t),l_n^{(\mu)}(t)\rangle_{\mu,\BR^+}=\frac{m!}{(\mu)_m}\delta_{mn}, \\
(\cL_\mu l_m^{(\mu)})(y)=i^m\sum_{j=0}^m\frac{(-m)_j}{j!}\biggl(\frac{1-iy}{2}\biggr)^{-\mu-j}
=i^m\biggl(\frac{1-iy}{2}\biggr)^{-\mu}\biggl(1-\frac{2}{1-iy}\biggr)^m=\cC_\mu(z^m)(y). 
\end{gather*}
Especially $\cL_\mu$ is a unitary isomorphism. Next we consider the following representation of $\mathfrak{sl}(2,\BC)$ on $L^2_\mu(\BR^+)_{\text{fin}}$. 
\begin{gather*}
\tau_\mu^{\BR^+}(H)\phi(t):=2t\frac{d\phi}{dt}(t)+\mu\phi(t), \qquad \tau_\mu^{\BR^+}(E)\phi(t):=-it\phi(t), \\ 
\tau_\mu^{\BR^+}(F):=-i\biggl(t\frac{d^2\phi}{dt^2}(t)+\mu\frac{d\phi}{dt}(t)\biggr). 
\end{gather*}
Then this representation is $\mathfrak{sl}(2,\BR)$-infinitesimally unitary, and $\cL_\mu$ intertwines the $\mathfrak{sl}(2,\BC)$-action, that is, we have 
\[ \cL_\mu\circ\tau_\mu^{\BR^+}(X)=\tau_\mu^{\bH}(X)\circ\cL_\mu \qquad (X\in\mathfrak{sl}(2,\BC)). \]
Especially we have 
\[ \tau_\mu^{\BR^+}({}^c\hspace{-1pt}H)l_m^{(\mu)}\!=-(2m+\mu)l_m^{(\mu)}, \quad \tau_\mu^{\BR^+}({}^c\hspace{-1pt}E)l_m^{(\mu)}\!=-ml_{m-1}^{(\mu)}, \quad
\tau_\mu^{\BR^+}({}^c\hspace{-1pt}F)l_m^{(\mu)}\!=(m+\mu)l_{m+1}^{(\mu)}. \]
Since $\tau_\mu^{\BR^+}({}^c\hspace{-1pt}H)=-i\tau_\mu^{\BR^+}(E-F)=-H_{\text{HO}}^{(\mu)}$ in (\ref{formula_HO_nu}), this solves the eigenvalue problem of $H_{\text{HO}}^{(\mu)}$ on $L^2_\mu(\BR^+)$. 
Now we have gotten three different pictures of the unitary highest weight representation of $\mathfrak{sl}(2,\BR)\simeq\mathfrak{su}(1,1)$, 
\[ \begin{matrix} L^2_\mu(\BR^+) & \overunderset{\cL_\mu}{\sim}{\longrightarrow} & \cH_\mu(\bH) & \overunderset{\cC_\mu}{\sim}{\longleftarrow} & \cH_\mu(\bD) \\
l_m^{(\mu)}(t) & \longmapsto & (\cL_\mu l_m^{(\mu)})(y)=(\cC_\mu z^m)(y) & \raisebox{5pt}{\rotatebox{180}{$\longmapsto$}} & z^m. \end{matrix} \]

\begin{remark}
By replacing $\BR^+$, $\bH$ and $\bD$ with a symmetric cone $\Omega\subset V$, a tube domain $\Omega+iV\subset V\otimes_\BR\BC$ and a bounded symmetric domain $D\subset V\otimes_\BR\BC$ 
associated to a Euclidean Jordan algebra $V$, these equivalences for $SL(2,\BR)\simeq SU(1,1)$ are generalized to those for Hermitian Lie groups of tube type, 
$Sp(n,\BR)$, $SU(n,n)$, $SO^*(4n)$, $SO_0(2,n)$ and $E_{7(-25)}$. See \cite{FK}. 
\end{remark}

Third, we consider a representation of $\mathfrak{sl}(2,\BC)$ on 
\[ L^2(\BR^n)\supset L^2(\BR^n)_{\text{fin}}:=\cP(\BR^n)e^{-r^2/2}, \]
given by 
\begin{gather*}
\tau^{\BR^n}(H)\psi(x)=\sum_{j=1}^n x_j\frac{\partial \psi}{\partial x_j}(x)+\frac{n}{2}\psi(x), \qquad \tau^{\BR^n}(E)\psi(x)=-\frac{i}{2}\sum_{j=1}^n x_j^2\psi(x), \\ 
\tau^{\BR^n}(F)\psi(x)=-\frac{i}{2}\sum_{j=1}^n\frac{\partial^2 \psi}{\partial x_j^2}(x). 
\end{gather*}
This representation is $\mathfrak{sl}(2,\BR)$-infinitesimally unitary. 
In terms of ${}^c\hspace{-1pt}H$, ${}^c\hspace{-1pt}E$, ${}^c\hspace{-1pt}F$, this representation is rewritten as 
\begin{align*}
\tau^{\BR^n}({}^c\hspace{-1pt}H)&=-i\tau^{\BR^n}(E-F)=-\frac{1}{2}\sum_{j=1}^n\biggl(x_j^2-\frac{\partial^2}{\partial x_j^2}\biggr)=-\sum_{j=1}^n\ra_j^\dagger\ra_j-\frac{n}{2}, \\
\tau^{\BR^n}({}^c\hspace{-1pt}E)&=\frac{1}{2}\tau^{\BR^n}(-iH\hspace{-1pt}+\hspace{-1pt}E\hspace{-1pt}+\hspace{-1pt}F)
=-\frac{i}{4}\sum_{j=1}^n\biggl(2x_j\frac{\partial}{\partial x_j}\hspace{-1pt}+\hspace{-1pt}1\hspace{-1pt}+\hspace{-1pt}x_j^2\hspace{-1pt}+\hspace{-1pt}\frac{\partial^2}{\partial x_j^2}\biggr)
=-\frac{i}{2}\sum_{j=1}^n (\ra_j)^2, \\
\tau^{\BR^n}({}^c\hspace{-1pt}F)&=\frac{1}{2}\tau^{\BR^n}(iH\hspace{-1pt}+\hspace{-1pt}E\hspace{-1pt}+\hspace{-1pt}F)
=-\frac{i}{4}\sum_{j=1}^n\biggl(-2x_j\frac{\partial}{\partial x_j}\hspace{-1pt}-\hspace{-1pt}1\hspace{-1pt}+\hspace{-1pt}x_j^2\hspace{-1pt}+\hspace{-1pt}\frac{\partial^2}{\partial x_j^2}\biggr)
=-\frac{i}{2}\sum_{j=1}^n (\ra_j^\dagger)^2. 
\end{align*}
According to the decomposition (\ref{formula_decomp}), for $\phi(t)\in L^2_{k+\frac{n}{2}}(\BR^+)_{\text{fin}}$, $h(x)\in \cH\cP_k(\BR^n)$, we have 
\[ \tau^{\BR^n}(X)\biggl(\phi\biggl(\frac{r^2(x)}{2}\biggr)h(x)\biggr)=(\tau_{k+\frac{n}{2}}^\mu(X)\phi)\biggl(\frac{r^2(x)}{2}\biggr)h(x) \qquad (X\in\mathfrak{sl}(2,\BC)). \]

\section{A generalization of NCHO}\label{section_NCHO}

In this section we consider a generalization of the non-commutative harmonic oscillator, with the Hamiltonian given by the self-adjoint operator on $L^2(\BR)\otimes\BC^2$, 
\begin{align*}
H_{\text{NCHO}}:\hspace{-3pt}&=\bA\biggl(-\frac{1}{2}\frac{d^2}{dx^2}+\frac{1}{2}x^2\biggr)+\bJ\biggl(x\frac{d}{dx}+\frac{1}{2}\biggr)+2\eta i\det(\bA\pm i\bJ)^{1/2}\bJ \\*
&=\bA\biggl(\ra^\dagger\ra+\frac{1}{2}\biggr)+\frac{1}{2}\bJ(\ra^2-(\ra^\dagger)^2)+2\eta i\det(\bA\pm i\bJ)^{1/2}\bJ, 
\end{align*}
where $\bA,\bJ\in M(2,\BR)$, $\bA={}^t\hspace{-1pt}\bA$, $\bJ=-{}^t\hspace{-1pt}\bJ$, and $\eta\in\BR$. 
That is, we want to find a function $\psi\in L^2(\BR^n)\otimes\BC^p$ satisfying 
\[ \biggl[-\bA_1\sum_{j=1}^n\frac{\partial^2}{\partial x_j^2}+i\bA_2\sum_{j=1}^n\biggl(x_j\frac{\partial}{\partial x_j}+\frac{1}{2}\biggr)+\bA_3\sum_{j=1}^n x_j^2\biggr]\psi=2\bC \psi, \]
where $\bA_k,\bC\in M(p,\BC)$, $\bA_k=\bA_k^\dagger$, $\bC=\bC^\dagger$. Here $\bC^\dagger={}^t\hspace{-1pt}\overline{\bC}$. 
Then by putting 
\begin{equation}\label{formula_AB}
\bA_1+\bA_3=:\bA, \qquad \frac{1}{2}(-i\bA_1+\bA_2+i\bA_3)=:\bB 
\end{equation}
so that $\bA,\bB\in M(p,\BC)$, $\bA=\bA^\dagger$, this is rewritten as 
\begin{equation}\label{formula_NCHO_a}
\biggl[\bA\sum_{j=1}^n\biggl(\ra_j^\dagger\ra_j+\frac{1}{2}\biggr)-i\bB\sum_{j=1}^n(\ra_j^\dagger)^2+i\bB^\dagger\sum_{j=1}^n(\ra_j)^2\biggr]\psi=2\bC \psi. 
\end{equation}
In terms of $\tau^{\BR^n}$, this becomes 
\[ \bigl[-\bA\tau^{\BR^n}({}^c\hspace{-1pt}H)+2\bB\tau^{\BR^n}({}^c\hspace{-1pt}F)-2\bB^\dagger\tau^{\BR^n}({}^c\hspace{-1pt}E)\bigr]\psi=2\bC \psi. \]
Then according to the decomposition (\ref{formula_decomp}), on $L^2_{k+\frac{n}{2}}(\BR^+)\otimes\cH\cP_k(\BR^n)\otimes\BC^p$, 
the above equation is reduced to an ordinary differential equation on $\cH_{k+\frac{n}{2}}(\bD)\otimes\BC^p$. 

\begin{theorem}
Let $h(x)\in\cH\cP_k(\BR^n)$, $\{u_m\}_{m=0}^\infty\subset\BC^p$, and let 
\[ \psi(x):=\sum_{m=0}^\infty u_m l_m^{(k+\frac{n}{2})}\biggl(\frac{r^2(x)}{2}\biggr)h(x). \]
Then $\psi(x)\in L^2(\BR^n)\otimes\BC^p$ and (\ref{formula_NCHO_a}) hold if and only if 
\begin{gather*}
f(z):=\sum_{m=0}^\infty u_m z^m\in \cH_{k+\frac{n}{2}}(\bD)\otimes\BC^p, \\
\bigl[-\bA\tau_{k+\frac{n}{2}}^{\bD}(H)+2\bB\tau_{k+\frac{n}{2}}^{\bD}(F)-2\bB^\dagger\tau_{k+\frac{n}{2}}^{\bD}(E)\bigr]f=2\bC f
\end{gather*}
hold. 
\end{theorem}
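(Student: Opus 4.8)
The plan is to transport equation \eqref{formula_NCHO_a} in two unitary steps, writing $\mu:=k+\tfrac{n}{2}$ throughout: first from $L^2(\BR^n)\otimes\BC^p$ down to $L^2_\mu(\BR^+)\otimes\BC^p$ using the Howe‑type reduction of Section \ref{section_HO} and the intertwining relation at the end of Section \ref{section_sl2}, and then from $L^2_\mu(\BR^+)\otimes\BC^p$ across to $\cH_\mu(\bD)\otimes\BC^p$ through the chain $L^2_\mu(\BR^+)\xrightarrow{\cL_\mu}\cH_\mu(\bH)\xleftarrow{\cC_\mu}\cH_\mu(\bD)$. We may assume $h\neq 0$. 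For the first step, consider the map $J_h\colon\phi(t)\mapsto\phi\bigl(\tfrac{r^2(x)}{2}\bigr)h(x)$. The norm identity computed in Section \ref{section_HO} shows that $J_h$ extends to an isometry (up to the positive scalar $\Vert h\Vert_{\cH\cP}$) of $L^2_\mu(\BR^+)$ onto the closed subspace $L^2_\mu(\BR^+)\otimes\BC h\subset L^2(\BR^n)$, and the last display of Section \ref{section_sl2} reads $\tau^{\BR^n}(X)\circ J_h=J_h\circ\tau_\mu^{\BR^+}(X)$ for $X\in\mathfrak{sl}(2,\BC)$. Since $\bA,\bB,\bB^\dagger,\bC$ act only on $\BC^p$ and commute with $\tau^{\BR^n}(\mathfrak{sl}(2,\BC))$, the operator $-\bA\tau^{\BR^n}({}^c\hspace{-1pt}H)+2\bB\tau^{\BR^n}({}^c\hspace{-1pt}F)-2\bB^\dagger\tau^{\BR^n}({}^c\hspace{-1pt}E)$ preserves $L^2_\mu(\BR^+)\otimes\BC h\otimes\BC^p$ and corresponds there, via $J_h\otimes\mathrm{id}_{\BC^p}$, to $-\bA\tau_\mu^{\BR^+}({}^c\hspace{-1pt}H)+2\bB\tau_\mu^{\BR^+}({}^c\hspace{-1pt}F)-2\bB^\dagger\tau_\mu^{\BR^+}({}^c\hspace{-1pt}E)$ on $L^2_\mu(\BR^+)\otimes\BC^p$. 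As $J_h$ sends $l_m^{(\mu)}$ to $l_m^{(\mu)}\bigl(\tfrac{r^2}{2}\bigr)h$, it follows that ``$\psi\in L^2(\BR^n)\otimes\BC^p$ and \eqref{formula_NCHO_a}'' is equivalent to ``$\Phi:=\sum_m u_m l_m^{(\mu)}\in L^2_\mu(\BR^+)\otimes\BC^p$ and the corresponding equation for $\Phi$''.

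For the second step I would use the unitary isomorphism $T_\mu:=\cC_\mu^{-1}\circ\cL_\mu\colon L^2_\mu(\BR^+)\to\cH_\mu(\bD)$, which by the identity $\cL_\mu l_m^{(\mu)}=\cC_\mu(z^m)$ of Section \ref{section_sl2} sends $l_m^{(\mu)}\mapsto z^m$. Combining $\cL_\mu\circ\tau_\mu^{\BR^+}(X)=\tau_\mu^{\bH}(X)\circ\cL_\mu$ with $\cC_\mu\circ\tau_\mu^{\bD}(Y)=\tau_\mu^{\bH}(\Ad(C)Y)\circ\cC_\mu$ gives $T_\mu\circ\tau_\mu^{\BR^+}(X)=\tau_\mu^{\bD}(\Ad(C^{-1})X)\circ T_\mu$; since ${}^c\hspace{-1pt}H=\Ad(C)H$, ${}^c\hspace{-1pt}E=\Ad(C)E$, ${}^c\hspace{-1pt}F=\Ad(C)F$, this specializes to
\[ T_\mu\circ\tau_\mu^{\BR^+}({}^c\hspace{-1pt}H)=\tau_\mu^{\bD}(H)\circ T_\mu,\qquad T_\mu\circ\tau_\mu^{\BR^+}({}^c\hspace{-1pt}E)=\tau_\mu^{\bD}(E)\circ T_\mu,\qquad T_\mu\circ\tau_\mu^{\BR^+}({}^c\hspace{-1pt}F)=\tau_\mu^{\bD}(F)\circ T_\mu, \]
consistently with \eqref{formula_basis}. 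Hence $T_\mu\otimes\mathrm{id}_{\BC^p}$ conjugates $-\bA\tau_\mu^{\BR^+}({}^c\hspace{-1pt}H)+2\bB\tau_\mu^{\BR^+}({}^c\hspace{-1pt}F)-2\bB^\dagger\tau_\mu^{\BR^+}({}^c\hspace{-1pt}E)$ to $-\bA\tau_\mu^{\bD}(H)+2\bB\tau_\mu^{\bD}(F)-2\bB^\dagger\tau_\mu^{\bD}(E)$ and carries $\Phi$ to $f=\sum_m u_m z^m$; being unitary, it makes $\Phi\in L^2_\mu(\BR^+)\otimes\BC^p$ equivalent to $f\in\cH_\mu(\bD)\otimes\BC^p$ (equivalently $\sum_m\tfrac{m!}{(\mu)_m}|u_m|^2<\infty$, by \eqref{formula_basis}) and the equation for $\Phi$ equivalent to the equation for $f$. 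Chaining the two steps yields the theorem.

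The one delicate point is the precise sense of ``\eqref{formula_NCHO_a} holds'' for an a priori merely $L^2$ function $\psi$, since $\tau^{\BR^n}({}^c\hspace{-1pt}H),\tau^{\BR^n}({}^c\hspace{-1pt}E),\tau^{\BR^n}({}^c\hspace{-1pt}F)$ are unbounded. I would dispose of this in one of two equivalent ways. Concretely, in the orthogonal bases $\{l_m^{(\mu)}\}$ and $\{z^m\}$ each of these operators is tridiagonal, shifting the index by $0$ or $\pm1$, so on all three pictures the eigenvalue equation, read off coefficientwise, is the single three‑term recurrence
\[ (2m+\mu)\bA u_m+2(m-1+\mu)\bB u_{m-1}+2(m+1)\bB^\dagger u_{m+1}=2\bC u_m\qquad(m\ge 0,\ u_{-1}:=0); \]
abstractly, $T_\mu$ (and, up to a scalar, $J_h$) is a unitary operator intertwining the triples $\tau_\mu^{\BR^+}({}^c\hspace{-1pt}H,{}^c\hspace{-1pt}E,{}^c\hspace{-1pt}F)$ and $\tau_\mu^{\bD}(H,E,F)$ on the common core $L^2_\mu(\BR^+)_{\text{fin}}$, hence also their closures and maximal domains, so the three operator equations are literally identified once $f$ (equivalently $\Phi$, equivalently $\psi$) is known to lie in the ambient Hilbert space; the normalization constant $\Vert h\Vert_{\cH\cP}$ never affects any ``if and only if''. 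Thus the main obstacle is entirely this unbounded‑operator bookkeeping: all the required algebraic input — in particular $\Ad(C^{-1})\{{}^c\hspace{-1pt}H,{}^c\hspace{-1pt}E,{}^c\hspace{-1pt}F\}=\{H,E,F\}$, the intertwining of $\tau^{\BR^n}$ with $\tau_\mu^{\BR^+}$, and the relevant norm identities — is already available in Sections \ref{section_HO} and \ref{section_sl2}.
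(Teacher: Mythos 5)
Your proposal is correct and follows exactly the route the paper intends (the paper states the theorem without a written proof, relying on the machinery it has just assembled): reduce via the norm identity and the relation $\tau^{\BR^n}(X)(\phi(r^2/2)h)=(\tau^{\BR^+}_{k+n/2}(X)\phi)(r^2/2)h$ to $L^2_{k+n/2}(\BR^+)\otimes\BC^p$, then transport through $\cC_{\mu}^{-1}\circ\cL_{\mu}$, which sends $l_m^{(\mu)}\mapsto z^m$ and carries $\tau_\mu^{\BR^+}({}^c\hspace{-1pt}H,{}^c\hspace{-1pt}E,{}^c\hspace{-1pt}F)$ to $\tau_\mu^{\bD}(H,E,F)$. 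Your coefficientwise three-term recurrence and the remarks on domains and on $h\ne 0$ are correct supplementary bookkeeping, not a deviation from the paper's argument.
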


In the following, for general $\mu>0$ we consider 
\begin{equation}\label{formula_NCHO_D0}
\bigl[-\bA\tau_\mu^{\bD}(H)+2\bB\tau_\mu^{\bD}(F)-2\bB^\dagger\tau_\mu^{\bD}(E)\bigr]f=2\bC f 
\end{equation}
on $\cH_\mu(\bD)\otimes\BC^p$, that is, 
\[ \biggl[\bA\biggl(2z\frac{d}{dz}+\mu\biggr)+2\bB\biggl(z^2\frac{d}{dz}+\mu z\biggr)+2\bB^\dagger\frac{d}{dz}\biggr]f=2\bC f, \]
or equivalently, 
\begin{equation}\label{formula_NCHO_D}
(\bB z^2+\bA z+\bB^\dagger)\frac{df}{dz}=\Bigl(-\mu\bB z-\frac{\mu}{2}\bA+\bC\Bigr)f. 
\end{equation}

\begin{remark}\label{rem_confluence}
We consider the space of functions on $\sqrt{\mu}\bD:=\{z\in\BC\mid |z|<\sqrt{\mu}\}$, 
\[ \cH_\mu(\sqrt{\mu}\bD):=\biggl\{ f(w)\in\cO(\sqrt{\mu}\bD) \biggm| \frac{\mu-1}{\mu\pi}\int_{\sqrt{\mu}\bD}|f(w)|^2\biggl(1-\frac{|w|^2}{\mu}\biggr)^{\mu-2}\,dw<\infty\biggr\}, \]
and the map $\cH_\mu(\bD)\to\cH_\mu(\sqrt{\mu}\bD)$, $f(z)\mapsto f(w/\sqrt{\mu})$. 
Then the equation (\ref{formula_NCHO_D}) on $\cH_\mu(\bD)\otimes\BC^p$ is equivalent to the equation 
\[ \biggl(\frac{1}{\sqrt{\mu}}\bB w^2+\bA w+\sqrt{\mu}\bB^\dagger\biggr)\frac{df}{dw}=\Bigl(-\sqrt{\mu}\bB w-\frac{\mu}{2}\bA+\bC\Bigr)f \]
on $\cH_\mu(\sqrt{\mu}\bD)\otimes\BC^p$. Then by putting $\sqrt{\mu}\bB=:\widetilde{\bB}$, $\bC-\frac{\mu}{2}\bA=:\widetilde{\bC}$ and taking the limit $\mu\to\infty$, this ``converges'' to the equation 
\[ (\bA w+\widetilde{\bB}^\dagger)\frac{df}{dw}=(-\widetilde{\bB} w+\widetilde{\bC})f \]
on 
\[ \cF(\BC)\otimes\BC^p:=\biggl\{ f(w)\in\cO(\BC) \biggm| \frac{1}{\pi}\int_{\BC}|f(w)|^2 e^{-|w|^2}\,dw<\infty\biggr\}\otimes\BC^p, \]
and via the Bargmann transform, this is equivalent to the equation 
\[ (\bA \ra^\dagger\ra+\widetilde{\bB}^\dagger\ra+\widetilde{\bB} \ra^\dagger)\psi=\widetilde{\bC}\psi \]
on $L^2(\BR)\otimes\BC^p$. Especially, let $p=2$, let $\bsigma_1,\bsigma_2,\bsigma_3\in M(2,\BC)$ be the Pauli matrices, let $\bI\in M(2,\BC)$ be the identity, and let $\omega,\Delta,g,\varepsilon\in\BR$. 
If $\bA=\omega\bI$, $\widetilde{\bB}=g\bsigma_1$ and $\widetilde{\bC}=-\Delta\bsigma_3-\varepsilon\bsigma_1+\lambda\bI$, then this is the asymmetric quantum Rabi model \cite{B1, B2, KRW, XZBL}, 
\begin{equation}\label{formula_Rabi}
(\omega\bI \ra^\dagger\ra+g\bsigma_1(\ra+\ra^\dagger)+\Delta\bsigma_3+\varepsilon\bsigma_1)\psi=\lambda \psi. 
\end{equation}
That is, the non-commutative harmonic oscillator is regarded as a ``covering model'' of the quantum Rabi model, as is pointed out in \cite{RW, W}. 
Similarly, if $\bA=\omega\bI$, $\widetilde{\bB}=g(\bsigma_1-i\bsigma_2)/2=:g\bsigma^-$, $\widetilde{\bB}^\dagger=g(\bsigma_1+i\bsigma_2)/2=:g\bsigma^+$ and $\widetilde{\bC}=-\Delta\bsigma_3+\lambda\bI$, 
then this is the Jaynes--Cummings model \cite{JC}, 
\[ (\omega\bI \ra^\dagger\ra+g(\bsigma^+\ra+\bsigma^-\ra^\dagger)+\Delta\bsigma_3)\psi=\lambda \psi. \]
\end{remark}

We return to the analysis of (\ref{formula_NCHO_D}). The inverse of $\bB z^2+\bA z+\bB^\dagger$ is written as 
\[ (\bB z^2+\bA z+\bB^\dagger)^{-1}=\frac{\adj(\bB z^2+\bA z+\bB^\dagger)}{\det(\bB z^2+\bA z+\bB^\dagger)}, \]
where $\adj$ is the adjugate matrix. Its denominator and numerator have degrees at most $2p$ and $2p-2$ respectively. 
Now we assume that all poles of $(\bB z^2+\bA z+\bB^\dagger)^{-1}$ are of order 1, and let 
\begin{equation}\label{partfrac}
(\bB z^2+\bA z+\bB^\dagger)^{-1}=\sum_{j=1}^N \frac{1}{z-\alpha_j}\bP_{\alpha_j}+\sum_{k=0}^M \bQ_k z^k 
\end{equation}
be its partial fraction decomposition, where $\alpha_j\in\BC$ are distinct, and $\bP_{\alpha_j},\bQ_k\in M(p,\BC)$, $\bP_{\alpha_j}\ne 0$. Then the following holds. 

\begin{lemma}\label{lem_partfrac}
\begin{enumerate}
\item If $\alpha\ne 0$ is a pole of $(\bB z^2+\bA z+\bB^\dagger)^{-1}$, then $\overline{\alpha}^{-1}$ is also a pole of it. 
\item For a pole $\alpha\ne 0$, $\bP_{\overline{\alpha}^{-1}}=-\bP_\alpha^\dagger$, and $\bQ_k=\bzero$. 
\item If $\det(\bB)\ne 0$, then $\displaystyle \sum_{j=1}^N \bP_{\alpha_j}=\bzero$,\; $\displaystyle \sum_{j=1}^N \alpha_j \bP_{\alpha_j}\bB=\bI$. 
\item If $\det(\bB)=0$, then $\displaystyle \sum_{j=1}^N \bP_{\alpha_j}=\bP_0^\dagger$,\; $\displaystyle \sum_{j=1}^N \bP_{\alpha_j}\bB=\bzero$,\; 
$\displaystyle \sum_{j=1}^N \alpha_j \bP_{\alpha_j}\bB=\bI-\bP_0^\dagger\bA$. 
\item $\rank \bP_{\alpha_j}\le \dim\Ker(\alpha_j^2\bB+\alpha_j\bA+\bB^\dagger)$. 
\item $\bP_{\alpha_j}(2\alpha_j\bB+\bA)\bP_{\alpha_j}=\bP_{\alpha_j}$. 
\end{enumerate}
\end{lemma}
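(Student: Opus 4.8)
The plan is to deduce the entire lemma from one functional equation for the matrix polynomial $P(z):=\bB z^2+\bA z+\bB^\dagger$. Since $\bA=\bA^\dagger$, for $z\ne0$ one has
\[ P(z)=z^2\bigl(\bB\,\bar z^{-2}+\bA\,\bar z^{-1}+\bB^\dagger\bigr)^\dagger=z^2\,P(1/\bar z)^\dagger,\qquad\text{hence}\qquad P(z)^{-1}=z^{-2}\bigl(P(1/\bar z)^{-1}\bigr)^\dagger. \]
Taking determinants in the first identity gives $\det P(z)=z^{2p}\,\overline{\det P(1/\bar z)}$, so $\alpha\ne0$ is a zero of $\det P$ exactly when $\overline\alpha^{-1}$ is. For (2) I expand $P(z)^{-1}=z^{-2}\bigl(P(1/\bar z)^{-1}\bigr)^\dagger$ near $z=\overline\alpha^{-1}$: there $1/\bar z\to\alpha$, $P(w)^{-1}$ has residue $\bP_\alpha$ at $w=\alpha$, and a short computation that tracks the factor from $z^{-2}$, the factor from the expansion of $1/\bar z$ about $\alpha$, and the conjugation in $\dagger$ yields $\bP_{\overline\alpha^{-1}}=-\bP_\alpha^\dagger$; since $\bP_\alpha\ne\bzero$ this also shows $\overline\alpha^{-1}$ is a genuine pole, proving (1). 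For the polynomial part: the standing hypothesis forces the pole of $P(z)^{-1}$ at $0$, if any, to be simple, so $\bigl(P(1/\bar z)^{-1}\bigr)^\dagger=O(|z|)$ as $z\to\infty$, whence $P(z)^{-1}=O(|z|^{-1})\to\bzero$ and the polynomial part $\sum_k\bQ_k z^k$ in (\ref{partfrac}) vanishes identically, i.e. $\bQ_k=\bzero$.

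Next I record the standard residue facts and read off (5) and (6). Because $P(\alpha_j)-P(z)=-(z-\alpha_j)\bigl(\bB(z+\alpha_j)+\bA\bigr)$ vanishes to first order at $z=\alpha_j$ while $P(z)^{-1}$ has there at most a simple pole, the functions $P(\alpha_j)P(z)^{-1}$ and $P(z)^{-1}P(\alpha_j)$ are holomorphic at $\alpha_j$, so their residues vanish: $P(\alpha_j)\bP_{\alpha_j}=\bP_{\alpha_j}P(\alpha_j)=\bzero$. The first relation gives $\bP_{\alpha_j}\BC^p\subseteq\Ker P(\alpha_j)=\Ker(\alpha_j^2\bB+\alpha_j\bA+\bB^\dagger)$, hence $\rank\bP_{\alpha_j}\le\dim\Ker(\alpha_j^2\bB+\alpha_j\bA+\bB^\dagger)$, which is (5). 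For (6), write the Laurent expansion $P(z)^{-1}=(z-\alpha_j)^{-1}\bP_{\alpha_j}+\bR_j+\cdots$ and the exact Taylor expansion $P(z)=P(\alpha_j)+(z-\alpha_j)(2\alpha_j\bB+\bA)+(z-\alpha_j)^2\bB$; the constant term of $P(z)^{-1}P(z)=\bI$ reads $\bR_jP(\alpha_j)+\bP_{\alpha_j}(2\alpha_j\bB+\bA)=\bI$, and right-multiplying by $\bP_{\alpha_j}$ (using $P(\alpha_j)\bP_{\alpha_j}=\bzero$) gives $\bP_{\alpha_j}(2\alpha_j\bB+\bA)\bP_{\alpha_j}=\bP_{\alpha_j}$.

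For (3) and (4) I substitute (\ref{partfrac}) with $\bQ_k=\bzero$ into $P(z)^{-1}P(z)=\bI$. Using $\bP_{\alpha_j}P(\alpha_j)=\bzero$ and the Taylor expansion above, each term $\bP_{\alpha_j}P(z)/(z-\alpha_j)=\bP_{\alpha_j}(2\alpha_j\bB+\bA)+(z-\alpha_j)\bP_{\alpha_j}\bB$ is a polynomial, so after summation and comparison of the coefficients of $z^1$ and $z^0$ one obtains the two universal identities
\[ \Bigl(\sum_j\bP_{\alpha_j}\Bigr)\bB=\bzero,\qquad \sum_j\alpha_j\bP_{\alpha_j}\bB+\Bigl(\sum_j\bP_{\alpha_j}\Bigr)\bA=\bI. \]
If $\det\bB\ne0$, the first forces $\sum_j\bP_{\alpha_j}=\bzero$, and the second then gives $\sum_j\alpha_j\bP_{\alpha_j}\bB=\bI$; this is (3). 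If $\det\bB=0$, I still need the value of $\sum_j\bP_{\alpha_j}$: comparing the $z^{-1}$-terms of $P(z)^{-1}=\sum_j\bP_{\alpha_j}/(z-\alpha_j)$ with those of $P(z)^{-1}=z^{-2}\bigl(P(1/\bar z)^{-1}\bigr)^\dagger$ as $z\to\infty$ — where $P(w)^{-1}\sim w^{-1}\bP_0$ near $w=0$ — gives $\sum_j\bP_{\alpha_j}=\bP_0^\dagger$, and feeding this back into the two universal identities yields all three assertions of (4).

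The routine part is the Laurent-coefficient bookkeeping; the steps demanding genuine care are (2) and (4), where one must correctly track the anti-holomorphic substitution $z\mapsto1/\bar z$ together with the adjoint, and — crucially in (4) — must invoke the standing ``all poles simple'' hypothesis to know that $0$ is at worst a simple pole of $P(z)^{-1}$, so that the expansion of $z^{-2}\bigl(P(1/\bar z)^{-1}\bigr)^\dagger$ at $z=\infty$ produces exactly the leading term $\bP_0^\dagger/z$ and nothing larger. Once this is secured, the rest is elementary linear algebra.
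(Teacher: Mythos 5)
Your proof is correct and follows essentially the same route as the paper: the functional equation $P(z)=z^2P(1/\bar z)^\dagger$ for items 1--2 and for the value of $\sum_j\bP_{\alpha_j}$, and multiplication of the partial fraction decomposition by $P(z)$ with comparison of coefficients for items 3--5. The only real deviation is item 6, where you match the constant Laurent coefficient of $P(z)^{-1}P(z)=\bI$ at $z=\alpha_j$ and right-multiply by $\bP_{\alpha_j}$, whereas the paper differentiates the partial-fraction identity and uses $\frac{d}{dz}P^{-1}=-P^{-1}P'P^{-1}$; both computations are equally valid.
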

\begin{proof}
1. If $\alpha\ne 0$ is a pole of $(\bB z^2+\bA z+\bB^\dagger)^{-1}$, then $\alpha^2\bB+\alpha\bA+\bB^\dagger$ is non-invertible, and hence 
\[ \overline{\alpha}^{-2}\bB+\overline{\alpha}^{-1}\bA+\bB^\dagger=\overline{\alpha}^{-2}(\alpha^2\bB+\alpha\bA+\bB^\dagger)^\dagger \]
is also non-invertible. Therefore $\overline{\alpha}^{-1}$ is also a pole of $(\bB z^2+\bA z+\bB^\dagger)^{-1}$. 

2. By (\ref{partfrac}) we have 
\begin{align*}
&(\bB z^2+\bA z+\bB^\dagger)^{-1}=z^{-2}((\bB\overline{z}^{-2}+\bA\overline{z}^{-1}+\bB^\dagger)^{-1})^\dagger \\
&=z^{-2}\biggl(\sum_{j=1}^N \frac{1}{z^{-1}-\overline{\alpha_j}}\bP_{\alpha_j}^\dagger+\sum_{k=0}^M \bQ_k^\dagger z^{-k}\biggr) 
=\sum_{j=1}^N \biggl(\frac{1}{z}+\frac{-\overline{\alpha_j}}{\overline{\alpha_j}z-1}\biggr)\bP_{\alpha_j}^\dagger+\sum_{k=0}^M \bQ_k^\dagger z^{-k-2}. 
\end{align*}
Then by the simple-pole assumption, we have $\bQ_k=\bzero$, and comparing the residues at $\alpha_j=\overline{\alpha_{j'}}^{-1}$ with (\ref{partfrac}), we get 
\[ \bP_\alpha=-\bP_{\overline{\alpha}^{-1}}^\dagger, \qquad \sum_{j=1}^N\bP_{\alpha_j}=\begin{cases} \bzero & (0\text{ is not a pole}), \\ \bP_0^\dagger & (0\text{ is a pole}). \end{cases} \]

3, 4. By (\ref{partfrac}) we have 
\[ \sum_{j=1}^N \frac{1}{z(z-\alpha_j)}\bP_{\alpha_j}(\bB z^2+\bA z+\bB^\dagger)=\frac{1}{z}\bI, \]
and taking the limit $z\to\infty$, we get $\sum_{j=1}^N \bP_{\alpha_j}\bB=\bzero$. Similarly, we have 
\begin{align*}
\bI&=\sum_{j=1}^N \frac{1}{z-\alpha_j}\bP_{\alpha_j}(\bB z^2+\bA z+\bB^\dagger) \\
&=\sum_{j=1}^N\biggl(\biggl(1+\frac{\alpha_j}{z-\alpha_j}\biggr)\bP_{\alpha_j}(\bB z+\bA)+\frac{1}{z-\alpha_j}\bP_{\alpha_j}\bB^\dagger\biggr) \\
&=\sum_{j=1}^N\biggl(\bP_{\alpha_j}\bB z+\frac{\alpha_j z}{z-\alpha_j}\bP_{\alpha_j}\bB+\bP_{\alpha_j}\bA+\frac{1}{z-\alpha_j}\bP_{\alpha_j}(\alpha_j\bA+\bB^\dagger)\biggr). 
\end{align*}
Then since $\sum_{j=1}^N \bP_{\alpha_j}\bB=\bzero$, by taking the limit $z\to\infty$, we have 
\[ \sum_{j=1}^N\alpha_j\bP_{\alpha_j}\bB=\bI-\sum_{j=1}^N\bP_{\alpha_j}\bA=\begin{cases} \bI & (0\text{ is not a pole}), \\ \bI-\bP_0^\dagger\bA & (0\text{ is a pole}). \end{cases} \]

5. By comparing the residues at $z=\alpha_j$ of 
\[ \sum_{j=1}^N \frac{1}{z-\alpha_j}(\bB z^2+\bA z+\bB^\dagger)\bP_{\alpha_j}=\bI, \]
we get $(\alpha_j^2\bB+\alpha_j\bA+\bB^\dagger)\bP_{\alpha_j}=\bzero$, and the desired formula holds. 

6. By differentiating (\ref{partfrac}) with respect to $z$, we get 
\begin{align*}
-\sum_{j=1}^N \frac{1}{(z-\alpha_j)^2}\bP_{\alpha_j}&=-(\bB z^2+\bA z+\bB^\dagger)^{-1}(2\bB z+\bA)(\bB z^2+\bA z+\bB^\dagger)^{-1} \\
&=-\sum_{j=1}^N\sum_{k=1}^N \frac{1}{(z-\alpha_j)(z-\alpha_k)}\bP_{\alpha_j}(2\bB z+\bA)\bP_{\alpha_k}. 
\end{align*}
Then multiplying $(z-\alpha_j)^2$ and substituting $z=\alpha_j$, we get the desired formula. 
\end{proof}

From this lemma we easily get the following. 
\begin{theorem}\label{thm_Fuchs}
Suppose that all poles of $(\bB z^2+\bA z+\bB^\dagger)^{-1}$ are of order 1, and we consider the partial fraction decomposition (\ref{partfrac}). 
\begin{enumerate}
\item The equation (\ref{formula_NCHO_D}) is equal to the Fuchsian equation 
\begin{equation}\label{formula_Fuchs}
\frac{df}{dz}=\sum_{j=1}^N\frac{1}{z-\alpha_j}\bP_{\alpha_j}\Bigl(-\mu\Bigl(\alpha_j\bB+\frac{1}{2}\bA\Bigr)+\bC\Bigr)f. 
\end{equation}
\item If (\ref{formula_NCHO_D}) is singular at $z=\alpha$ $(\ne 0,\infty)$, then it is also singular at $z=\overline{\alpha}^{-1}$. 
\item We have 
$\displaystyle -\sum_{j=1}^N \bP_{\alpha_j}\Bigl(-\mu\Bigl(\alpha_j\bB+\frac{1}{2}\bA\Bigr)+\bC\Bigr)
=\begin{cases} \mu\bI & (\det(\bB)\ne 0), \\ \mu\bI-\bP_0^\dagger\bigl(\frac{\mu}{2}\bA+\bC\bigr) & (\det(\bB)=0). \end{cases}$
\item We have $\rank\bigl(\bP_{\alpha_j}\bigl(-\mu\bigl(\alpha_j\bB+\frac{1}{2}\bA\bigr)+\bC\bigr)\bigr)\le \dim\Ker(\alpha_j^2\bB+\alpha_j\bA+\bB^\dagger)$. 
That is, $\bP_{\alpha_j}\bigl(-\mu\bigl(\alpha_j\bB+\frac{1}{2}\bA\bigr)+\bC\bigr)$ has an eigenvalue 0 
with multiplicity at least $p-\dim\Ker(\alpha_j^2\bB+\alpha_j\bA+\bB^\dagger)$. 
\item For $u\in \operatorname{Im}\bP_{\alpha_j}\subset\BC^p$, we have 
$\bP_{\alpha_j}\bigl(-\mu\bigl(\alpha_j\bB+\frac{1}{2}\bA\bigr)+\bC\bigr)u=\bP_{\alpha_j}\bC u-\frac{\mu}{2}u$. 
That is, non-zero eigenvalues of $\bP_{\alpha_j}\bigl(-\mu\bigl(\alpha_j\bB+\frac{1}{2}\bA\bigr)+\bC\bigr)$ are equal to $-\frac{\mu}{2}$ plus those of $\bP_{\alpha_j}\bC$. 
\end{enumerate}
\end{theorem}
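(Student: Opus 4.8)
The plan is to derive all five assertions directly from Lemma~\ref{lem_partfrac}, by left-multiplying the equation (\ref{formula_NCHO_D}) by $(\bB z^2+\bA z+\bB^\dagger)^{-1}$ and inserting the identities collected there; no computation beyond that lemma should be needed. For part (1): the simple-pole hypothesis forces the polynomial part of (\ref{partfrac}) to vanish, $\bQ_k=\bzero$ (this is established inside Lemma~\ref{lem_partfrac}(2)), so $(\bB z^2+\bA z+\bB^\dagger)^{-1}=\sum_{j=1}^N\frac{1}{z-\alpha_j}\bP_{\alpha_j}$ and (\ref{formula_NCHO_D}) becomes $\frac{df}{dz}=\sum_j\frac{1}{z-\alpha_j}\bP_{\alpha_j}(-\mu\bB z-\tfrac{\mu}{2}\bA+\bC)f$. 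Writing $z=(z-\alpha_j)+\alpha_j$ in the term $-\mu\bB z$ splits off a pole-free contribution $-\mu\sum_j\bP_{\alpha_j}\bB$, which vanishes by the relation $\sum_j\bP_{\alpha_j}\bB=\bzero$ obtained in the proof of parts (3)--(4) of Lemma~\ref{lem_partfrac}; what remains is exactly the Fuchsian form (\ref{formula_Fuchs}). Part (2) is just a restatement of Lemma~\ref{lem_partfrac}(1): (\ref{formula_NCHO_D}) is singular at $z=\alpha$ precisely when $\alpha^2\bB+\alpha\bA+\bB^\dagger$ is non-invertible, i.e.\ when $\alpha$ is a pole of $(\bB z^2+\bA z+\bB^\dagger)^{-1}$, and the lemma says $\overline{\alpha}^{-1}$ is then also a pole.

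For part (3) I would sum the residue matrices $\bP_{\alpha_j}\bigl(-\mu(\alpha_j\bB+\tfrac12\bA)+\bC\bigr)$ over $j$ and substitute $\sum_j\bP_{\alpha_j}=\bzero$ and $\sum_j\alpha_j\bP_{\alpha_j}\bB=\bI$ from Lemma~\ref{lem_partfrac}(3) when $\det\bB\ne0$, or the corresponding identities of Lemma~\ref{lem_partfrac}(4) when $\det\bB=0$; the terms collapse to $-\mu\bI$, respectively $-\mu\bI+\bP_0^\dagger(\tfrac{\mu}{2}\bA+\bC)$, and the overall minus sign gives the stated formula. (Up to sign, $-\sum_j(\cdots)$ is the residue matrix of (\ref{formula_Fuchs}) at $z=\infty$.) Part (4) follows from $\rank(\bP_{\alpha_j}N)\le\rank\bP_{\alpha_j}$ together with the rank bound Lemma~\ref{lem_partfrac}(5); the ``eigenvalue $0$ with multiplicity at least $p-\dim\Ker(\cdots)$'' phrasing is the same statement read through $\dim\Ker$. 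For part (5), writing $u=\bP_{\alpha_j}v\in\operatorname{Im}\bP_{\alpha_j}$, one has $\bP_{\alpha_j}\bigl(-\mu(\alpha_j\bB+\tfrac12\bA)\bigr)u=-\tfrac{\mu}{2}\bP_{\alpha_j}(2\alpha_j\bB+\bA)\bP_{\alpha_j}v=-\tfrac{\mu}{2}u$ by Lemma~\ref{lem_partfrac}(6), so adding the $\bC$-term yields $\bP_{\alpha_j}\bC u-\tfrac{\mu}{2}u$. Since $\operatorname{Im}\bP_{\alpha_j}$ is invariant under $\bP_{\alpha_j}\bigl(-\mu(\alpha_j\bB+\tfrac12\bA)+\bC\bigr)$, every eigenvector for a non-zero eigenvalue lies in $\operatorname{Im}\bP_{\alpha_j}$, where this operator acts as $\bP_{\alpha_j}\bC-\tfrac{\mu}{2}$, giving the spectral comparison.

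I expect no serious obstacle: all the content is already packed into Lemma~\ref{lem_partfrac}, and Theorem~\ref{thm_Fuchs} is bookkeeping on top of it. The one place where the hypotheses are genuinely used is part (1), namely the vanishing of the polynomial part $\sum_k\bQ_k z^k$ of (\ref{partfrac}); without the simple-pole assumption the equation (\ref{formula_NCHO_D}) need not reduce to a globally Fuchsian equation, and it is exactly Lemma~\ref{lem_partfrac}(2) that removes those terms. A secondary, minor point of care is in the eigenvalue reformulations of parts (4) and (5): one should restrict to the invariant subspace $\operatorname{Im}\bP_{\alpha_j}$ before reading off the multiplicity of $0$ or the shifted spectrum.
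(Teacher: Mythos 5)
Your proposal is correct and is exactly the argument the paper intends: the paper states Theorem~\ref{thm_Fuchs} as an immediate consequence of Lemma~\ref{lem_partfrac} without writing out details, and your derivation (splitting $z=(z-\alpha_j)+\alpha_j$ and using $\sum_j\bP_{\alpha_j}\bB=\bzero$ for part (1), the sum identities of Lemma~\ref{lem_partfrac}(3)--(4) for part (3), the rank bound of (5) for part (4), and $\bP_{\alpha_j}(2\alpha_j\bB+\bA)\bP_{\alpha_j}=\bP_{\alpha_j}$ for part (5)) is precisely that bookkeeping.
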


Suppose that all singularities $\alpha_j$ are not in $S^1:=\{z\in\BC\mid |z|=1\}$. Then if a solution $f$ of (\ref{formula_NCHO_D}) is holomorphic at 
all the singularities $\alpha_1,\ldots,\alpha_{N'}$ inside the unit disk $\bD$, then $f$ automatically has the radius of convergence strictly larger than 1 at the origin, 
and hence $f$ belongs to $\cH_\mu(\bD)\otimes\BC^p$. That is, we do not need to estimate the norm of $f$. Especially, if $\bA,\bB$ (or $\bA_1,\bA_2,\bA_3$ in (\ref{formula_AB})) satisfy the condition 
\begin{multline}\label{cond_pos_def}
\bB z+\bA+\bB^\dagger\overline{z}=2(\bA_1\cos^2\theta+\bA_2\cos\theta\sin\theta+\bA_3\sin^2\theta) \\ \text{ is positive definite for all }z=ie^{-2i\theta}\in S^1, 
\end{multline}
then a solution $f$ of (\ref{formula_NCHO_D}) in $\cO(\bD)\otimes\BC^p$ automatically belongs $\cH_\mu(\bD)\otimes\BC^p$. 

\begin{remark}
If the left hand side of (\ref{formula_NCHO_D0}) is bounded below on $\cH_\mu(\bD)\otimes\BC^p$, then $\bB z+\bA+\bB^\dagger\overline{z}$ is positive semidefinite for all $z\in S^1$. Indeed, 
\begin{align*}
&\biggl\langle\bigl[-\bA\tau_\mu^{\bD}(H)+2\bB\tau_\mu^{\bD}(F)-2\bB^\dagger\tau_\mu^{\bD}(E)\bigr]\sum_{m=0}^\infty u_mz^m, \sum_{m=0}^\infty v_mz^m\biggr\rangle_{\mu,\bD} \\
&=\sum_{m=0}^\infty \frac{(m+1)!}{(\mu)_m}\biggl(\frac{m+1}{\mu+m}v_{m+1}^\dagger \bA u_{m+1}+2v_{m+1}^\dagger \bB u_m+2v_m^\dagger \bB^\dagger u_{m+1}+\frac{\mu+m}{m+1}v_m^\dagger \bA u_m\biggr)
\end{align*}
holds for all $\{u_m\},\{v_m\}\subset\BC^p$ by (\ref{formula_basis}), and especially if the left hand side of (\ref{formula_NCHO_D0}) has the lower bound $c\in\BR$, 
then for $u\in\BC^p$, $w\in \bD$, we have 
\begin{align*}
&0\le\bigl\langle\bigl[-\bA\tau_\mu^{\bD}(H)+2\bB\tau_\mu^{\bD}(F)-2\bB^\dagger\tau_\mu^{\bD}(E)-c\bI\bigr](1-z\overline{w})^{-\mu}u, (1-z\overline{w})^{-\mu}u\bigr\rangle_{\mu,\bD} \\
&=\biggl\langle\bigl[-\bA\tau_\mu^{\bD}(H)+2\bB\tau_\mu^{\bD}(F)-2\bB^\dagger\tau_\mu^{\bD}(E)-c\bI\bigr]\sum_{m=0}^\infty \frac{(\mu)_m}{m!}\overline{w}^mz^mu, 
\sum_{m=0}^\infty \frac{(\mu)_m}{m!}\overline{w}^mz^mu\biggr\rangle_{\mu,\bD} \\
&=\sum_{m=0}^\infty\frac{(\mu)_{m+1}}{m!}|w|^{2m}u^\dagger(\bA |w|^2+2\bB w+2\bB^\dagger\overline{w}+\bA)u-c\sum_{m=0}^\infty\frac{(\mu)_m}{m!}|w|^{2m}u^\dagger u \\
&=\mu(1-|w|^2)^{-\mu-1}u^\dagger(\bA |w|^2+2\bB w+2\bB^\dagger\overline{w}+\bA)u-c(1-|w|^2)^{-\mu}u^\dagger u. 
\end{align*}
Then by taking the limit $|w|\nearrow 1$, we get $u^\dagger(\bB w+\bA+\bB^\dagger\overline{w})u\ge 0$ for all $w\in S^1$. 
Also, by putting $w=0$ we get $\bA\ge\frac{c}{\mu}\bI$. 
\end{remark}

Next we consider the action of the Lie group $SU(1,1)$. In the following we assume $\mu\in\BZ_{>0}$, but we may consider general $\mu>0$ by taking the universal covering group of $SU(1,1)$. 
We take an arbitrary $g\in SU(1,1)$. Then the equation (\ref{formula_NCHO_D0}) is transformed as 
\begin{align*}
&\tau_\mu^\bD(g)\bigl[-\bA\tau_\mu^{\bD}(H)+2\bB\tau_\mu^{\bD}(F)-2\bB^\dagger\tau_\mu^{\bD}(E)\bigr]\tau_\mu^\bD(g)^{-1}f \\
&=\bigl[-\bA\tau_\mu^{\bD}(\Ad(g)H)+2\bB\tau_\mu^{\bD}(\Ad(g)F)-2\bB^\dagger\tau_\mu^{\bD}(\Ad(g)E)\bigr]f=2\bC f 
\end{align*}
on $\cH_\mu(\bD)\otimes\BC^p$, where $\Ad(g)X:=gXg^{-1}$. Let ${}^g\!\bA,{}^g\bB\in M(p,\BC)$ be the matrices satisfying 
\begin{align*}
&-\bA\otimes \Ad(g)H+2\bB\otimes \Ad(g)F-2\bB^\dagger\otimes \Ad(g)E \\
&=-{}^g\!\bA\otimes H+2{}^g\bB\otimes F-2{}^g\bB^\dagger\otimes E
\in \mathfrak{sl}(2,\BC)\otimes M(p,\BC), 
\end{align*}
namely, for $g=\bigl[\begin{smallmatrix}a&b\\ \overline{b}&\overline{a}\end{smallmatrix}\bigr]\in SU(1,1)$, let 
\[ ({}^g\!\bA,{}^g\bB):=\bigl((|a|^2+|b|^2)\bA-2(\overline{a}b\bB+a\overline{b}\bB^\dagger), -\overline{ab}\bA+\overline{a}^2\bB+\overline{b}^2\bB^\dagger\bigr). \]
Then this equation is equivalent to 
\[ ({}^g\bB z^2+{}^g\!\bA z+{}^g\bB^\dagger)\frac{df}{dz}=\Bigl(-\mu{}^g\bB z-\frac{\mu}{2}{}^g\!\bA+\bC\Bigr)f. \]
On the other hand, the equation (\ref{formula_Fuchs}) is transformed as 
\[ \tau_\mu^\bD(g)\frac{d}{dz}\tau_\mu^\bD(g)^{-1}f
=\tau_\mu^\bD(g)\sum_{j=1}^N\frac{1}{z-\alpha_j}\bP_{\alpha_j}\Bigl(-\mu\Bigl(\alpha_j\bB+\frac{1}{2}\bA\Bigr)+\bC\Bigr)\tau_\mu^\bD(g)^{-1}f, \]
and by direct computation this is equivalent to 
\begin{align*}
\frac{df}{dz}&=\biggl(\sum_{\substack{1\le j\le N \\ g.\alpha_j\ne\infty}}\frac{1}{z-g.\alpha_j}\bP_{\alpha_j}\Bigl(-\mu\Bigl(\alpha_j\bB+\frac{1}{2}\bA\Bigr)+\bC\Bigr) \\
&\hspace{42.5pt} -\frac{1}{z-g.\infty}\biggl(\sum_{1\le j\le N}\bP_{\alpha_j}\Bigl(-\mu\Bigl(\alpha_j\bB+\frac{1}{2}\bA\Bigr)+\bC\Bigr)+\mu\bI\biggr)\biggr)f \\
&=\biggl(\sum_{\substack{1\le j\le N \\ g.\alpha_j\ne\infty}}\frac{1}{z-g.\alpha_j}
\bP_{\alpha_j}\Bigl(-\mu\Bigl(\alpha_j\bB+\frac{1}{2}\bA\Bigr)+\bC\Bigr)-\frac{1}{z-g.\infty}\bP_0^\dagger\Bigl(\frac{\mu}{2}\bA+\bC\Bigr)\biggr)f, 
\end{align*}
where we have used Theorem \ref{thm_Fuchs}.3, and we set $\bP_0:=\bzero$ if $\det(\bB)\ne 0$. 
Here, for $g=\bigl[\begin{smallmatrix}a&b\\ \overline{b}&\overline{a}\end{smallmatrix}\bigr]\in SU(1,1)$ we write 
\[ g.\alpha:=\begin{cases} \displaystyle \frac{a\alpha+b}{\overline{b}\alpha+\overline{a}} & \bigl(\alpha\ne -\frac{\overline{a}}{\overline{b}}\bigr), \\ 
\infty & \bigl(\alpha= -\frac{\overline{a}}{\overline{b}}\bigr), \end{cases}
\qquad g.\infty:=\frac{a}{\overline{b}}. \]
Then these two equations must coincide. That is, we have 
\begin{align*}
\frac{df}{dz}&=({}^g\bB z^2+{}^g\!\bA z+{}^g\bB^\dagger)^{-1}\Bigl(-\mu{}^g\bB z-\frac{\mu}{2}{}^g\!\bA+\bC\Bigr)f \\
&=\biggl(\sum_{\substack{1\le j\le N \\ g.\alpha_j\ne\infty}}\frac{1}{z-g.\alpha_j}
\bP_{\alpha_j}\Bigl(-\mu\Bigl(\alpha_j\bB+\frac{1}{2}\bA\Bigr)+\bC\Bigr)-\frac{1}{z-g.\infty}\bP_0^\dagger\Bigl(\frac{\mu}{2}\bA+\bC\Bigr)\biggr)f. 
\end{align*}
Also, the positivity condition (\ref{cond_pos_def}) is stable under $(\bA,\bB)\mapsto ({}^g\!\bA,{}^g\bB)$, 
since for $z\in S^1$, $g=\bigl[\begin{smallmatrix}a&b\\\overline{b}&\overline{a}\end{smallmatrix}\bigr]\in SU(1,1)$, we have 
\[ {}^g\bB z+{}^g\!\bA+{}^g\bB^\dagger\overline{z}=\lvert-\overline{b}z+a\rvert^2\bigl(\bB(g^{-1}.z)+\bA+\bB^\dagger\overline{(g^{-1}.z)}\bigr). \]

In the following we consider the case $p=2$. We assume $(\bA,\bB)$ satisfies the positivity condition (\ref{cond_pos_def}), 
and that $\det(\bB z^2+\bA z+\bB^\dagger)$ has at least 3 distinct roots. 
If it has 4 distinct roots, then by (\ref{cond_pos_def}) the roots are not on $S^1$, and by Lemma \ref{lem_partfrac}.1, 
they are of the form $\{\beta,\gamma,\overline{\beta}^{-1},\overline{\gamma}^{-1}\}$ with $\beta,\gamma\in\bD$. 
Then by considering $g\in SU(1,1)$ satisfying $g.\beta=0$, $g.\overline{\beta}^{-1}=\infty$, $g.\gamma=:\alpha\in\bD$ and replacing $(\bA,\bB)$ with $({}^g\!\bA,{}^g\bB)$, 
without loss of generality we may assume $\det(\bB)=0$ and $\det(\bB z^2+\bA z+\bB^\dagger)$ has the roots $\{0,\alpha,\overline{\alpha}^{-1}\}$ with $\alpha\in\bD$. 
Next, since $\bA$ is positive definite by (\ref{cond_pos_def}), by replacing $\bB$ and $\bC$ 
with $\bA^{-\frac{1}{2}}\bB\bA^{-\frac{1}{2}}$ and $\bA^{-\frac{1}{2}}\bC\bA^{-\frac{1}{2}}$, 
we may assume $\bA=\bI$. In addition, by taking a gauge transform by a unitary matrix, we may assume 
\begin{equation}\label{std_form}
\bA=\begin{bmatrix} 1&0\\0&1 \end{bmatrix}, \quad \bB=\begin{bmatrix} b_1&b_2\\0&0 \end{bmatrix}, \quad \bC=\begin{bmatrix} c_1&c_2\\\overline{c_2}&c_3 \end{bmatrix} \quad
\biggl(\begin{matrix} c_1,c_3\in\BR,\; b_1,b_2,c_2\in\BC, \\ b_1\ne 0,\; 2|b_1|+|b_2|^2<1 \end{matrix}\biggr). 
\end{equation}
Then (\ref{formula_NCHO_D}) becomes 
\begin{equation}\label{formula_NCHO_rank2}
\begin{bmatrix} b_1z^2+z+\overline{b_1} & b_2z^2 \\ \overline{b_2} & z \end{bmatrix}\frac{df}{dz}
=\begin{bmatrix} -\mu b_1z-\frac{\mu}{2}+c_1 & -\mu b_2z+c_2 \\ \overline{c_2} & -\frac{\mu}{2}+c_3 \end{bmatrix}. 
\end{equation}
Let 
\begin{align*}
&\begin{bmatrix} b_1z^2+z+\overline{b_1} & b_2z^2 \\ \overline{b_2} & z \end{bmatrix}^{-1} 
\begin{bmatrix} -\mu b_1z-\frac{\mu}{2}+c_1 & -\mu b_2z+c_2 \\ \overline{c_2} & -\frac{\mu}{2}+c_3 \end{bmatrix} \\
&=\frac{1}{z(b_1z^2+(1-|b_2|^2)z+\overline{b_1})} \begin{bmatrix} z & -b_2z^2 \\ -\overline{b_2} & b_1z^2+z+\overline{b_1} \end{bmatrix} 
\begin{bmatrix} -\mu b_1z-\frac{\mu}{2}+c_1 & -\mu b_2z+c_2 \\ \overline{c_2} & -\frac{\mu}{2}+c_3 \end{bmatrix} \\
&=:\frac{1}{z}\begin{bmatrix} 0 & 0 \\ X_0 & Y_0 \end{bmatrix}+\frac{1}{(z-\alpha)(z-\overline{\alpha}^{-1})}\begin{bmatrix} V_1z+V_2 & W_1z+W_2 \\ X_1z+X_2 & Y_1z+Y_2 \end{bmatrix}
\\&=\frac{1}{z}\begin{bmatrix} 0 & 0 \\ X_0 & Y_0 \end{bmatrix}+\frac{1}{z-\alpha_+}\begin{bmatrix} V_+ & W_+ \\ X_+ & Y_+ \end{bmatrix}
+\frac{1}{z-\alpha_-}\begin{bmatrix} V_- & W_- \\ X_- & Y_- \end{bmatrix}, 
\end{align*}
where we write $\alpha=:\alpha_+$, $\overline{\alpha}^{-1}=:\alpha_-$, so that 
\begin{gather*}
\alpha_{\pm}=\frac{-1+|b_2|^2\pm\sqrt{(1-|b_2|^2)^2-4|b_1|^2}}{2b_1}, \quad 
\begin{bmatrix} 0 & 0 \\ X_0 & Y_0 \end{bmatrix}=\frac{1}{\overline{b_1}}\begin{bmatrix} 0 & 0 \\ -\overline{b_2} & \overline{b_1} \end{bmatrix}
\begin{bmatrix} c_1-\frac{\mu}{2} & c_2 \\ \overline{c_2} & c_3-\frac{\mu}{2} \end{bmatrix}, \\
\begin{bmatrix} V_\pm & W_\pm \\ X_\pm & Y_\pm \end{bmatrix}=\frac{\pm 1}{b_1(\alpha_+-\alpha_-)}\begin{bmatrix} 1 & -b_2\alpha_\pm \\ -\overline{b_2}\alpha_\pm^{-1} & |b_2|^2 \end{bmatrix}
\begin{bmatrix} -\mu b_1\alpha_\pm-\frac{\mu}{2}+c_1 & -\mu b_2\alpha_\pm+c_2 \\ \overline{c_2} & -\frac{\mu}{2}+c_3 \end{bmatrix}. 
\end{gather*}
We note that $V_\pm Y_\pm-W_\pm X_\pm=0$ holds by Theorem \ref{thm_Fuchs}.4. 
Then (\ref{formula_NCHO_rank2}) is rewritten as 
\begin{align*}
\frac{df_1}{dz}&=\frac{V_1z+V_2}{(z-\alpha)(z-\overline{\alpha}^{-1})}f_1+\frac{W_1z+W_2}{(z-\alpha)(z-\overline{\alpha}^{-1})}f_2, \\
\frac{df_2}{dz}&=\biggl(\frac{X_0}{z}+\frac{X_1z+X_2}{(z-\alpha)(z-\overline{\alpha}^{-1})}\biggr)f_1+\biggl(\frac{Y_0}{z}+\frac{Y_1z+Y_2}{(z-\alpha)(z-\overline{\alpha}^{-1})}\biggr)f_2. 
\end{align*}
By transforming this into a single higher-order differential equation, we get 
\begin{align*}
\biggl(\frac{d}{dz}-\frac{Y_0}{z}-\frac{Y_1z+Y_2}{(z-\alpha)(z-\overline{\alpha}^{-1})}\biggr)\frac{(z-\alpha)(z-\overline{\alpha}^{-1})}{W_1z+W_2}
\biggl(\frac{d}{dz}-\frac{V_1z+V_2}{(z-\alpha)(z-\overline{\alpha}^{-1})}\biggr)f_1& \\
=\biggl(\frac{X_0}{z}+\frac{X_1z+X_2}{(z-\alpha)(z-\overline{\alpha}^{-1})}\biggr)f_1&, 
\end{align*}
that is, 
\begin{align*}
&\biggl[ \frac{d^2}{dz^2}-\biggl(\frac{Y_0}{z}+\frac{V_++Y_+-1}{z-\alpha}+\frac{V_-+Y_--1}{z-\overline{\alpha}^{-1}}+\frac{W_1}{W_1z+W_2}\biggr)\frac{d}{dz} \\
&{}+\hspace{-1pt}\frac{V_1(Y_0\hspace{-1pt}+\hspace{-1pt}Y_1)\hspace{-1pt}-\hspace{-1pt}W_1(X_0\hspace{-1pt}+\hspace{-1pt}X_1)}{(z-\alpha)(z-\overline{\alpha}^{-1})} 
\hspace{-1pt}+\hspace{-1pt}\frac{V_2Y_0-W_2X_0}{z(z\hspace{-1pt}-\hspace{-1pt}\alpha)(z\hspace{-1pt}-\hspace{-1pt}\overline{\alpha}^{-1})}
\hspace{-1pt}+\hspace{-1pt}\frac{V_2W_1-V_1W_2}{(z\hspace{-1pt}-\hspace{-1pt}\alpha)(z\hspace{-1pt}-\hspace{-1pt}\overline{\alpha}^{-1})(W_1z\hspace{-1pt}+\hspace{-1pt}W_2)} \biggr]f_1=0.
\end{align*}
Therefore we get the following. 

\begin{proposition}
The equation (\ref{formula_NCHO_rank2}) is equivalent to the single differential equation 
\begin{align*}
&\biggl[ \frac{d^2}{dz^2}+\biggl(\frac{-\kappa_0+\frac{\mu}{2}}{z}+\frac{1-\kappa_1+\frac{\mu}{2}}{z-\alpha}+\frac{1+\overline{\kappa_1}+\frac{\mu}{2}}{z-\overline{\alpha}^{-1}}
-\frac{1}{z-\epsilon}\biggr)\frac{d}{dz} \\*
&{}+\frac{\mu\bigl(-\overline{\kappa_0}+\frac{\mu}{2}\bigr)}{(z-\alpha)(z-\overline{\alpha}^{-1})} +\frac{q_1}{z(z-\alpha)(z-\overline{\alpha}^{-1})}
+\frac{q_2}{(z-\alpha)(z-\overline{\alpha}^{-1})(z-\epsilon)} \biggr]f_1=0, 
\end{align*}
where 
\begin{gather*}
\alpha=\frac{-1+|b_2|^2+\sqrt{(1-|b_2|^2)^2-4|b_1|^2}}{2b_1}, \qquad 
\kappa_0=\frac{\overline{b_1}c_3-\overline{b_2}c_2}{\overline{b_1}}=\frac{\tr(\adj(\bB^\dagger)\bC)}{\tr(\adj(\bA)\bB^\dagger)}, \\
\kappa_1=\frac{1}{b_1(\alpha-\overline{\alpha}^{-1})}\tr\biggl(\begin{bmatrix} 1 & -b_2\alpha \\ -\overline{b_2}\alpha^{-1} & |b_2|^2 \end{bmatrix}
\begin{bmatrix} c_1 & c_2 \\ \overline{c_2} & c_3 \end{bmatrix}\biggr)=\frac{\tr(\adj(\bB\alpha+\bA+\bB^\dagger\alpha^{-1})\bC)}{(\alpha-\overline{\alpha}^{-1})\tr(\adj(\bA)\bB)}, \\
\epsilon=\frac{c_2}{b_2\bigl(c_3+\frac{\mu}{2}\bigr)}, \qquad q_1=\frac{\bigl(c_1-\frac{\mu}{2}\bigr)\bigl(c_3-\frac{\mu}{2}\bigr)-|c_2|^2}{b_1}
=\frac{\det\bigl(\bC-\frac{\mu}{2}\bA\bigr)}{\tr(\adj(\bA)\bB)}, \\
q_2=\frac{\mu\bigl(b_2\bigl(c_1-\frac{\mu}{2}\bigr)-b_1c_2\bigr)+b_2\bigl(\bigl(c_1-\frac{\mu}{2}\bigr)\bigl(c_3-\frac{\mu}{2}\bigr)-|c_2|^2\bigr)}{b_1b_2\bigl(c_3+\frac{\mu}{2}\bigr)}. 
\end{gather*}
\end{proposition}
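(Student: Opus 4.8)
The plan is to prove the Proposition by an explicit comparison of coefficients, continuing the reduction carried out just above. Recall that, after the normalisations bringing $(\bA,\bB,\bC)$ to the standard form \eqref{std_form}, the equation \eqref{formula_NCHO_D} becomes the $2\times2$ first-order system \eqref{formula_NCHO_rank2}; this was put in partial-fraction form with residue matrix $\left[\begin{smallmatrix}0&0\\X_0&Y_0\end{smallmatrix}\right]$ at $z=0$ and $\left[\begin{smallmatrix}V_\pm&W_\pm\\X_\pm&Y_\pm\end{smallmatrix}\right]$ at $z=\alpha_\pm$ (with $\alpha_+=\alpha$, $\alpha_-=\overline{\alpha}^{-1}$), and then, by solving the first scalar equation for $f_2$ and substituting into the second, transformed into the displayed second-order equation for $f_1$. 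So what remains is only (i) to evaluate $X_0,Y_0,V_\pm,W_\pm,X_\pm,Y_\pm$ from the $2\times2$ matrix products already displayed, and (ii) to match the coefficients of that equation with the asserted $\alpha$, $\kappa_0$, $\kappa_1$, $\epsilon$, $q_1$, $q_2$.

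Step (i) is carried out by expanding those products and using Vieta's relations $\alpha_++\alpha_-=-(1-|b_2|^2)/b_1$, $\alpha_+\alpha_-=\overline{b_1}/b_1$ for the roots of $b_1z^2+(1-|b_2|^2)z+\overline{b_1}$. One gets $Y_0=c_3-\frac{\mu}{2}-\overline{b_2}c_2/\overline{b_1}$, which is exactly $\kappa_0-\frac{\mu}{2}$, so the $z^{-1}$ part of the first-order coefficient is $(-\kappa_0+\frac{\mu}{2})/z$; one gets $W_1=-b_2(c_3+\frac{\mu}{2})/b_1$ and $W_2=c_2/b_1$, so $W_1z+W_2=W_1(z-\epsilon)$ with $\epsilon=c_2/(b_2(c_3+\frac{\mu}{2}))$ (well-defined in the generic case $b_2\ne0$, $c_3+\frac{\mu}{2}\ne0$), whence $-W_1/(W_1z+W_2)=-1/(z-\epsilon)$; and for the residues of the first-order coefficient at $\alpha$ and $\overline{\alpha}^{-1}$ only the traces $V_\pm+Y_\pm$ are needed, since $V_\pm Y_\pm-W_\pm X_\pm=0$ (Theorem~\ref{thm_Fuchs}.4) forces $\left[\begin{smallmatrix}V_\pm&W_\pm\\X_\pm&Y_\pm\end{smallmatrix}\right]$ to have rank $\le1$. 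Computing $V_\pm+Y_\pm=\tr\bigl(\bP_{\alpha_\pm}(-\mu(\alpha_\pm\bB+\frac{1}{2}\bA)+\bC)\bigr)$ and collapsing its $\mu$-part by $b_1(\alpha_++\alpha_-)+1-|b_2|^2=0$ gives $V_++Y_+=\kappa_1-\frac{\mu}{2}$ and $V_-+Y_-=-\overline{\kappa_1}-\frac{\mu}{2}$, so the residues $-(V_\pm+Y_\pm-1)$ are $1-\kappa_1+\frac{\mu}{2}$ and $1+\overline{\kappa_1}+\frac{\mu}{2}$, as asserted. The trace-form expressions for $\kappa_0,\kappa_1$ follow from $\bA=\bI$, $\adj(\bI)=\bI$, $\tr\bB=b_1$, $\tr\bB^\dagger=\overline{b_1}$, $\adj(\bB^\dagger)=\left[\begin{smallmatrix}0&0\\-\overline{b_2}&\overline{b_1}\end{smallmatrix}\right]$.

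For step (ii) there are three zeroth-order coefficients. The term $\frac{V_2Y_0-W_2X_0}{z(z-\alpha)(z-\overline{\alpha}^{-1})}$ gives $q_1=V_2Y_0-W_2X_0$; with $V_2=(c_1-\frac{\mu}{2})/b_1$ this equals $\bigl((c_1-\frac{\mu}{2})(c_3-\frac{\mu}{2})-|c_2|^2\bigr)/b_1=\det(\bC-\frac{\mu}{2}\bA)/\tr(\adj(\bA)\bB)$. Using $W_1z+W_2=W_1(z-\epsilon)$, the term $\frac{V_2W_1-V_1W_2}{(z-\alpha)(z-\overline{\alpha}^{-1})(W_1z+W_2)}$ contributes the coefficient $q_2=(V_2W_1-V_1W_2)/W_1=V_2-V_1W_2/W_1$, and inserting $V_1=-\mu-b_2\overline{c_2}/b_1$, $W_2/W_1=-\epsilon$ gives the stated expression. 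Finally $\frac{V_1(Y_0+Y_1)-W_1(X_0+X_1)}{(z-\alpha)(z-\overline{\alpha}^{-1})}$ is evaluated by expanding $Y_1=\overline{b_2}c_2/\overline{b_1}$, $X_1=\overline{b_2}(c_1-\frac{\mu}{2})/\overline{b_1}$ (again via Vieta) together with $V_1,W_1,X_0,Y_0$; all cross-terms cancel and the result is $\mu\bigl(\frac{\mu}{2}-(c_3-b_2\overline{c_2}/b_1)\bigr)=\mu(-\overline{\kappa_0}+\frac{\mu}{2})$, since $\overline{\kappa_0}=c_3-b_2\overline{c_2}/b_1$. This gives the asserted equation.

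The bulk of the work lies in the step-(ii) bookkeeping, especially the cancellation producing $\mu(-\overline{\kappa_0}+\frac{\mu}{2})$ and the closed form of $q_2$: once $\alpha$ is eliminated via Vieta these become finite rational identities in $b_1,b_2,c_1,c_2,c_3,\mu$ that can be verified directly, though the structural identities of Lemma~\ref{lem_partfrac} and Theorem~\ref{thm_Fuchs} ($\bP_{\overline{\alpha}^{-1}}=-\bP_\alpha^\dagger$, the vanishing $2\times2$ determinants, and the residue-at-infinity identity in its $\det\bB=0$ form) shorten the algebra considerably. As a consistency check, the resulting equation is Fuchsian with regular singular points $0,\alpha,\overline{\alpha}^{-1},\epsilon,\infty$, the point $\epsilon$ being apparent with exponents $0,2$ (the underlying $2\times2$ system is regular there); the exponent pairs $\{0,1+\kappa_0-\frac{\mu}{2}\}$, $\{0,\kappa_1-\frac{\mu}{2}\}$, $\{0,-\overline{\kappa_1}-\frac{\mu}{2}\}$, $\{0,2\}$ at the finite singular points, together with the pair at $\infty$, then sum to $5-2=3$, in accordance with Fuchs' relation.
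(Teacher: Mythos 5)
Your proposal is correct and follows essentially the same route as the paper, whose ``proof'' is precisely the displayed computation preceding the Proposition: partial-fraction decomposition of the coefficient matrix of (\ref{formula_NCHO_rank2}), elimination of $f_2$, and coefficient matching using $V_\pm Y_\pm-W_\pm X_\pm=0$ and the Vieta relations for $b_1z^2+(1-|b_2|^2)z+\overline{b_1}$. Your explicit evaluations of $X_0,Y_0,V_i,W_i,X_i,Y_i$ and the resulting identifications of $\kappa_0,\kappa_1,\epsilon,q_1,q_2$ all check out, and the Fuchs-relation consistency check at the end is a nice (correct) addition.
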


This equation has original regular singularities at $\{0,\alpha,\overline{\alpha}^{-1},\infty\}$ and an additional apparent singularity at $z=\epsilon$. 
The characteristic exponents at each singularity are given as 
\[ \begin{Bmatrix} z=0 & z=\alpha & z=\overline{\alpha}^{-1} & z=\epsilon & z=\infty \\ 0 & 0 & 0 & 0 & \mu \\ 
1+\kappa_0-\frac{\mu}{2} & \kappa_1-\frac{\mu}{2} & -\overline{\kappa_1}-\frac{\mu}{2} & 2 & -\overline{\kappa_0}+\frac{\mu}{2} \end{Bmatrix}. \]
Such an equation appears in the context of isomonodromic deformations and Painlev\'e equations (see e.g. \cite{Ok}). 
If the space of solutions for the equation (\ref{formula_NCHO_rank2}) in $\cH_\mu(\bD)\otimes\BC^2$ is 2-dimensional, 
then all local solutions around the singularities $0,\alpha\in\bD$ must be holomorphic, and this holds only if the corresponding characteristic exponents are positive integers. That is, 

\begin{corollary}
If the space of solutions for the equation (\ref{formula_NCHO_rank2}) in $\cH_\mu(\bD)\otimes\BC^2$ is 2-dimensional, then $1+\kappa_0-\frac{\mu}{2},\kappa_1-\frac{\mu}{2}\in\BZ_{>0}$. 
\end{corollary}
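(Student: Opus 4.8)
The plan is to deduce the corollary from the local theory of Fuchsian equations, applied to the scalar second-order equation of the preceding proposition at the singular points lying inside the unit disk. First I would note that, since $(\bA,\bB)$ satisfies the positivity condition (\ref{cond_pos_def}), the discussion just before that proposition identifies the solutions of (\ref{formula_NCHO_rank2}) in $\cH_\mu(\bD)\otimes\BC^2$ with those solutions that are holomorphic on all of $\bD$. Hence the problem becomes local: I must understand when \emph{every} solution extends holomorphically across $z=0$ and across $z=\alpha$, these being the only singularities inside $\bD$ — indeed $|\overline{\alpha}^{-1}|=|\alpha|^{-1}>1$ and $\infty\notin\bD$, while $z=\epsilon$, when it lies in $\bD$, is an apparent singularity and so imposes no obstruction.

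Next I would invoke that on a simply connected subdomain of $\bD\setminus\{0,\alpha\}$ the solution space of (\ref{formula_NCHO_rank2}) is two-dimensional; since the $\cH_\mu(\bD)\otimes\BC^2$-solutions form a subspace of it, the hypothesis that this subspace is two-dimensional is equivalent to saying that \emph{all} solutions extend holomorphically across $z=0$ and $z=\alpha$, i.e. the local monodromy at each of these points is trivial and neither point is a pole of any solution. I would then read off the consequences from the exponent table displayed after the proposition. At $z=0$ the characteristic exponents are $0$ and $1+\kappa_0-\frac{\mu}{2}$: absence of a branch point forces $1+\kappa_0-\frac{\mu}{2}\in\BZ$, and absence of a pole forces $1+\kappa_0-\frac{\mu}{2}\in\BZ_{\ge 0}$. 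To rule out the value $0$ I would argue that if $1+\kappa_0-\frac{\mu}{2}=0$ then the two exponents at $z=0$ coincide, while the residue at $z=0$ of the coefficient of $\frac{d}{dz}$ in the scalar equation equals $-\kappa_0+\frac{\mu}{2}=1\neq 0$; thus $z=0$ is a genuine regular singular point with a repeated exponent $0$, which always carries a logarithmic solution, contradicting triviality of the monodromy. Hence $1+\kappa_0-\frac{\mu}{2}\in\BZ_{>0}$. The same argument at $z=\alpha$, where the exponents are $0$ and $\kappa_1-\frac{\mu}{2}$ and the residue of the coefficient of $\frac{d}{dz}$ is $1-\kappa_1+\frac{\mu}{2}$ (which equals $1$ precisely when $\kappa_1-\frac{\mu}{2}=0$), gives $\kappa_1-\frac{\mu}{2}\in\BZ_{>0}$.

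The step I expect to be the main obstacle is the middle one: justifying cleanly that ``the $\cH_\mu(\bD)\otimes\BC^2$-solution space is two-dimensional'' is the same as ``all Frobenius solutions at $0$ and at $\alpha$ are single-valued and pole-free'', together with the standard but essential fact that a repeated exponent $0$ at an honest regular singular point of a second-order linear ODE always produces a logarithmic solution — it is exactly this that upgrades the conclusion from $\BZ_{\ge 0}$ to $\BZ_{>0}$, and it relies on knowing that $z=0$ and $z=\alpha$ remain genuine singular points of the scalar equation in the boundary case, which is why the nonvanishing of the residues $-\kappa_0+\frac{\mu}{2}$ and $1-\kappa_1+\frac{\mu}{2}$ computed above is used. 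Everything else is a direct substitution into the already-computed exponent table.
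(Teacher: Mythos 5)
Your proof is correct and follows essentially the same route as the paper: the paper's justification is precisely the (terser) observation that a $2$-dimensional space of solutions in $\cH_\mu(\bD)\otimes\BC^2$ forces every local solution around the singularities $0,\alpha\in\bD$ to be holomorphic, which in turn requires the corresponding characteristic exponents to be positive integers. Your additional details --- the reduction to $\cO(\bD)\otimes\BC^2$ via the positivity condition (\ref{cond_pos_def}), and the repeated-exponent-forces-logarithm argument that excludes the exponent value $0$ --- simply make explicit what the paper leaves implicit.
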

We note that this condition is not sufficient since there may exist solutions with logarithmic terms. 

As a special case, if $\bB=\bB^\dagger$ holds, or equivalently if $\bA_1=\bA_3$ holds in (\ref{formula_AB}), 
then by a suitable $g\in SO(1,1):=SU(1,1)\cap \allowbreak M(2,\BR)$ and by a suitable gauge transform, 
we can change $(\bA,\bB,\bC)$ to a form in (\ref{std_form}) with $b_1\in\BR$, $b_2=0$. Then by transforming (\ref{formula_NCHO_rank2}) into a single higher-order differential equation, 
we get a Heun equation \cite{R}, 
\begin{equation}\label{HeunDE}
\biggl[ \frac{d^2}{dz^2}+\biggl(\frac{-\kappa_0+\frac{\mu}{2}}{z}+\frac{1-\kappa_1+\frac{\mu}{2}}{z-\alpha}+\frac{1+\kappa_1+\frac{\mu}{2}}{z-\alpha^{-1}}\biggr)\frac{d}{dz}
+\frac{\mu(1-\kappa_0+\frac{\mu}{2})z+q_1}{z(z-\alpha)(z-\alpha^{-1})} \biggr]f_1=0, 
\end{equation}
with regular singularities at $\{0,\alpha,\alpha^{-1},\infty\}$ and with no additional apparent singularity. 
The characteristic exponents at each singularity are given as 
\[ \begin{Bmatrix} z=0 & z=\alpha & z=\alpha^{-1} & z=\infty \\ 0 & 0 & 0 & \mu \\
1+\kappa_0-\frac{\mu}{2} & \kappa_1-\frac{\mu}{2} & -\kappa_1-\frac{\mu}{2} & 1-\kappa_0+\frac{\mu}{2} \end{Bmatrix}. \]

\begin{example}
The original $\eta$-shifted non-commutative harmonic oscillator $H_{\mathrm{NCHO}}\psi\allowbreak=\lambda\psi$ treated in \cite{RW}, in the standard form, corresponds to (\ref{formula_NCHO_a}) with $n=1$, $p=2$ and 
\begin{gather*}
\bA=\begin{bmatrix} \beta & 0 \\ 0 & \gamma \end{bmatrix}, \qquad \bB=\frac{1}{2}\begin{bmatrix} 0 & i \\ -i & 0 \end{bmatrix}, \qquad
2\bC=\begin{bmatrix} \lambda & 2\eta i\sqrt{\beta\gamma-1} \\ -2\eta i\sqrt{\beta\gamma-1} & \lambda \end{bmatrix}. 
\end{gather*}
The positivity condition (\ref{cond_pos_def}) is equivalent to $\beta,\gamma>0$, $\beta\gamma>1$, which makes the spectra of $H_{\mathrm{NCHO}}$ discrete and bounded below. 
Its restriction to $L^2_{\mathrm{even}}(\BR)\otimes\BC^2$ and $L^2_{\mathrm{odd}}(\BR)\otimes\BC^2$ are equivalent to the differential equation (\ref{formula_NCHO_D}) 
with $\mu=\frac{1}{2}$ and $\mu=\frac{3}{2}$ respectively. Let $l:=\Bigl[\begin{smallmatrix} \sqrt{\beta i}^{-1}&0\\0&\sqrt{-\gamma i}^{-1}\end{smallmatrix}\Bigr]\in GL(2,\BC)$, and we put 
\begin{gather*}
\bA':=l\bA l^\dagger=\begin{bmatrix}1&0\\0&1\end{bmatrix}, \qquad \bB':=l\bB l^\dagger=\frac{1}{2\sqrt{\beta\gamma}}\begin{bmatrix}0&1\\1&0\end{bmatrix}=\begin{bmatrix}0&g\\g&0\end{bmatrix}, \\
\bC':=l\bC l^\dagger=\frac{\lambda}{2}\begin{bmatrix}\beta^{-1}&0\\0&\gamma^{-1}\end{bmatrix}+\eta\frac{\sqrt{\beta\gamma-1}}{\sqrt{\beta\gamma}}\begin{bmatrix}0&1\\1&0\end{bmatrix}
=\begin{bmatrix}\lambda'-\Delta&-\varepsilon\\-\varepsilon&\lambda'+\Delta\end{bmatrix},
\end{gather*}
with $g,\lambda',\Delta,\varepsilon\in\BR$, $|g|<\frac{1}{2}$, as an analogue of the Rabi model (\ref{formula_Rabi}) (with $\omega=1$). 
Let $\tanh(\theta):=\sqrt{\beta\gamma}^{-1}=2g$ so that (\ref{formula_NCHO_D}) has 5 regular singularities at $\pm\tanh(\theta/2)^{\pm 1}$ and $\infty$, 
and let $h_+:=\Bigl[\begin{smallmatrix} \cosh(\theta/2) & \sinh(\theta/2) \\ \sinh(\theta/2) & \cosh(\theta/2) \end{smallmatrix}\Bigr]$, 
$h_-:=\Bigl[\begin{smallmatrix} i\cosh(\theta/2) & -i\sinh(\theta/2) \\ i\sinh(\theta/2) & -i\cosh(\theta/2) \end{smallmatrix}\Bigr]\in SU(1,1)$. 
Then by replacing $\bA,\bB,\bC$ with 
\begin{align*}
{}^{h_{\pm}}\!\bA'&=(\cosh\theta)\bA'\mp 2(\sinh\theta)\bB'
=\frac{1}{\sqrt{1-4g^2}}\begin{bmatrix} 1 & \mp 4g^2 \\ \mp 4g^2 & 1 \end{bmatrix}, \\
{}^{h_{\pm}}\bB'&=-\frac{1}{2}(\sinh\theta)\bA'\pm(\cosh\theta)\bB'
=\frac{g}{\sqrt{1-4g^2}}\begin{bmatrix} -1 & \pm 1 \\ \pm 1 & -1 \end{bmatrix}, 
\end{align*}
and $\bC'$, we get equivalent differential equations. Transforming this into a single differential equation, 
we get the Heun differential equation (\ref{HeunDE}) with 
\begin{gather*}
\alpha=2g=\sqrt{\beta\gamma}^{-1}, \quad \kappa_0=\kappa_\pm, \quad \kappa_1=\kappa_\mp, \quad q_1=q_\pm, \qquad \text{where} \\ 
\kappa_\pm=\frac{\lambda'\mp\varepsilon}{\sqrt{1-4g^2}}=\frac{\lambda}{4}\frac{\beta+\gamma}{\sqrt{\beta\gamma(\beta\gamma-1)}}\pm\eta, \\ 
\begin{split}
q_\pm&=-\frac{1}{2g}\biggl(\lambda^{\prime 2}-\frac{\lambda'\mu}{\sqrt{1-4g^2}}+\frac{\mu^2}{4}(1+4g^2)\pm\frac{4g^2\mu\varepsilon}{\sqrt{1-4g^2}}-\varepsilon^2-\Delta^2\biggr) \\
&=-\frac{1}{\sqrt{\beta\gamma}}\biggl(\frac{\lambda^2}{4}-\frac{\lambda\mu\sqrt{\beta\gamma}(\beta+\gamma)}{4\sqrt{\beta\gamma-1}}+\frac{\mu^2}{4}(\beta\gamma+1)\mp\eta\mu-\eta^2(\beta\gamma-1)\biggr). 
\end{split}
\end{gather*}
For details on analysis of this equation, see \cite{RW}. 

Next we consider a confluence process as in Remark \ref{rem_confluence}. Let $\tilde{g}:=\sqrt{\mu}g$, $\tilde{\lambda}:=\lambda'-\frac{\mu}{2}$, $z=w/\sqrt{\mu}$. 
Then 
by taking the limit $\mu\to\infty$, we get 
\begin{gather*}
\biggl[ \frac{d^2}{dw^2}+\biggl(\frac{-\tilde{\kappa}_\pm}{w}+\frac{1-\tilde{\kappa}_\mp}{w-2\tilde{g}}-2\tilde{g}\biggr)\frac{d}{dw}
-\frac{2\tilde{g}(1-\tilde{\kappa}_\pm)w-\tilde{q}_\pm}{w(w-2\tilde{g})} \biggr]f_1=0, \quad \text{where} \\
\tilde{\kappa}_\pm:=\tilde{\lambda}+\tilde{g}^2\mp\varepsilon, \qquad 
\tilde{q}_\pm:=(\tilde{\lambda}+\tilde{g}^2)(\tilde{\lambda}-3\tilde{g}^2)\pm 4\tilde{g}^2\varepsilon-\varepsilon^2-\Delta^2, 
\end{gather*}
since $\kappa_\pm=\tilde{\kappa}_\pm+\frac{\mu}{2}+O(\mu^{-1})$, $q_\pm=-\frac{\sqrt{\mu}}{2\tilde{g}}(\tilde{q}_\pm+O(\mu^{-1}))$ hold. 
This gives the confluent Heun picture of the asymmetric quantum Rabi model (\ref{formula_Rabi}). For details see \cite{KRW, RW}. 
\end{example}

\end{document}